%
%
\documentclass[aps,pra,twocolumn,groupedaddress,amsfonts,amssymb,amsmath,showpacs]{revtex4}

\usepackage{amsthm}
\usepackage{graphicx}

\theoremstyle{plain}
\newtheorem{thm}{Theorem}
\newtheorem{proposition}[thm]{Proposition}
\newtheorem{lemma}{Lemma}
\theoremstyle{definition}
\newtheorem{defn}{Definition}


\bibliographystyle{h-physrev}

\begin{document}


\title{Equivalence between quantum simultaneous games and quantum
sequential games}


\author{Naoki Kobayashi}
\email{kobayashi@ASone.c.u-tokyo.ac.jp}
\altaffiliation{Contact address: c/o Prof. A. Shimizu, Department of
Basic Science, The University of Tokyo,
3-8-1 Komaba, Tokyo 153-8902, Japan}
\affiliation{Department of Physics, Graduate School of Science, The
University of Tokyo, 7-3-1 Hongo, Bunkyo-ku, Tokyo 113-0033, Japan}


\date{\today}

\begin{abstract}
 A framework for discussing relationships between different types of
 games is proposed. Within the framework, quantum simultaneous games, finite
 quantum simultaneous games, quantum sequential games, and finite quantum
 sequential games are defined. In addition, a notion of equivalence
 between two games is defined. Finally, the following three theorems are
 shown: (1) For any quantum simultaneous game $G$, there exists a
 quantum sequential game equivalent to $G$. (2) For any finite quantum
 simultaneous game $G$, there exists a finite quantum sequential game
 equivalent to $G$. (3) For any finite quantum
 sequential game $G$, there exists a finite quantum simultaneous game
 equivalent to $G$. 
\end{abstract}

\pacs{03.67.-a, 02.50.Le}

\maketitle

\section{Introduction}
Game theory is a well-established branch of applied mathematics 
first developed by von Neumann and Morgenstern \cite{Neumann1944}. It
offers a mathematical 
model of a situation in which decision-makers interact and helps us to
understand what happens in such situations. Although game theory was
originally developed in the context of economics, it has also been
applied to many other disciplines in social sciences like political science
\cite{Ordeshook1986}, and even to biology \cite{Smith1982}.

Meyer \cite{Meyer1999} and Eisert et al. \cite{Eisert1999} brought the game theory into the physics community
and created a new field, quantum game theory. They both quantized a classical
game and found interesting new properties which the original classical game 
does not possess. Nevertheless, their quantized games seem quite different.
$PQ$ penny flipover studied by Meyer is a quantum sequential game, in
which players take turns in performing some operations on a quantum system. 
On the other hand, quantum Prisoners' Dilemma studied by Eisert et al. 
is a quantum simultaneous game, in which there are $n$ players and a
quantum system which consists of $n$ subsystems, and player $i$
performs an operation only on the $i$-th subsystem. 

Since the seminal works of Meyer and Eisert et al., many studies have been
made to quantize classical games and find interesting phenomena
\cite{Eisert2000, Marinatto2000, Benjamin2001, Du2002, Flitney2002, Flitney2004}. Most of the quantum games ever studied are classified
into either quantum simultaneous games or quantum sequential games,
although not much has been done on the latter.

Now that we see that game theory is combined with quantum theory and there
are two types 
of quantum games, several questions naturally arise: (a) Are quantum 
games truly different from classical games? (b) If so, in what sense are they
different? (c) What is the relationship between quantum simultaneous
games and quantum sequential games? 
To answer these questions, it is necessary to examine the whole
structure of game theory including classical games and quantum games,
not a particular phenomenon of a particular game. 

A work by Lee and Johnson \cite{Lee2003} is a study along this line. 
They developed a formalism of games including classical games and quantum
games. With the formalism they addressed the questions (a) and (b),
concluding that 
``playing games quantum mechanically can be more efficient''
and that ``finite classical games consist of a strict subset of finite quantum games''. However, they did not give a precise definition of the phrase
``consist of a strict subset''.

The purpose of this paper is twofold. The first is to present a
foundation and terminology 
for discussing relationships between various types of games.
The second is to answer the question (c). Our conclusions are the following:
(1) For any quantum simultaneous game $G$, there exists a
 quantum sequential game equivalent to $G$. (2) For any finite quantum
 simultaneous game $G$, there exists a finite quantum sequential game
 equivalent to $G$. (3) For any finite quantum
 sequential game $G$, there exists a finite quantum simultaneous game
 equivalent to $G$.

This paper is organized as follows. In Section \ref{framework}, a
framework for describing various types of games and discussing
relationships between them is presented. In Section \ref{varioustypes},
we define within the framework a number of classical and quantum games,
including quantum simultaneous games and quantum sequential games. In
Section \ref{secEquivalence}, we define a notion of `equivalence', which
is one of relationships between two games. Some properties of
equivalence and a sufficient condition for equivalence are also
examined. Section \ref{GameClass} gives the definition of game classes
and some binary relations between game classes. In Section
\ref{MainTheorems}, we prove the three theorems mentioned above.
 Finally, in Section \ref{secDiscussion}, we discuss some consequences
 of the theorems.

\section{\label{framework}Framework for the theory}
In order to discuss relationships between different types of games, we
need a common framework in which various types of games are
described. As the first step in our analysis, we will construct such a
framework for our theory.

For the construction, a good place to start is to consider what is game
theory. Game theory is the mathematical study of \textit{game
situations} which is characterized by the following three features:
\begin{enumerate}
 \item There are two or more decision-makers, or players.
 \item Each player develops his/her strategy for pursuing his/her
       objectives. On the basis of the strategy, he/she chooses his/her  action
       from possible alternatives. 
 \item As a result of all players' actions, some situation is
       realized. Whether the situation is preferable or not for one
       player depends not only on his/her action, but also on the other
       players' actions. 
\end{enumerate}

How much the realized situation is preferable for a player is
quantified by a real number called a payoff. Using this term, we can
rephrase the second feature as ``each player develops his/her strategy
to maximize the expectation value of his/her payoff''. The reason why
the expectation value is used to evaluate strategies is that we can
determine the resulting situation only probabilistically in general,
even when all players' strategies are known. 

As a mathematical representation of the three features of game
situations, we define a normal form of a game.
\begin{defn}
 A normal form of a game is a triplet $(N, \Omega, f)$ whose components
 satisfy the following conditions.
  \begin{itemize}
  \item $N=\{1,2,\dots,n\}$ is a finite set.
  \item $\Omega = \Omega_1\times\dots\times\Omega_n$, where 
	$\Omega_i$ is a nonempty set.
  \item $f$ is a function from $\Omega$ to $\mathbf{R}^n$.
 \end{itemize}
\end{defn}
Here, $N$ denotes a set of players. $\Omega_i$ is a set of player $i$'s
strategies, which prescribes how he/she acts. The $i$-th element
of $f(\omega_1,\dots,\omega_n)$ is the expectation value of the payoff
for player $i$, when player $j$ adopts a strategy $\omega_j$. 

Next, we propose a general definition of games, which works as a
framework for discussing relationship between various kinds of games. We
can regard a game as consisting of some `entities' (like players, cards,
coins, etc.) and a
set of rules under which a game situation occurs. We model the
`entities' in the form of a tuple $T$. Furthermore, we represent the game
situation caused by the `entities' $T$ under a rule $R$ as a normal form
of a game, 
and write it as $R(T)$. Using these formulations, we define a game as follows.
\begin{defn}\label{DefGame}
 We define a game as a pair $(T,R)$, where $T$ is a tuple, and $R$ is a
 rule which determines uniquely a normal form from $T$. When $G=(T,R)$
 is a game, we refer to $R(T)$ as the normal form of the game $G$. We
 denote the set of all games by \textbf{G}.
\end{defn}

The conception of the above definition will be clearer if
we describe various kinds of games in the form of the pair defined above. This
will be done in the next section. 

Thus far, we have implicitly regarded strategies and actions
of \textit{individual} players as elementary components of a game.  In
classical game theory, such modeling of
games is referred to as a noncooperative game, in contrast to a 
cooperative game in which strategies and actions of \textit{groups} of
players are elementary. However, we will call a pair in Definition
\ref{DefGame} simply a game, because
we will deal with only noncooperative games in this paper.

\section{\label{varioustypes}Various types of games}
In this section, various types of classical games and quantum games are
introduced.  First, we confirm
that strategic games, which is a well-established representation of games in classical game theory
(see e.g. \cite{Osborne1994}), can be
described in the 
framework of Definition \ref{DefGame}. Then, we define two quantum
games, namely, quantum simultaneous games and quantum sequential
games. 

\subsection{Strategic Games}
We can redefine strategic games using the framework of Definition
\ref{DefGame} as follows.
\begin{defn}
 A strategic game is a game $(T,R)$ which has the following form.
 \begin{enumerate}
  \item $T=(N,S,f)$, and each component satisfies the following
	condition.
	\begin{itemize}
	 \item $N=\{1,2,\dots,n\}$ is a finite set.
	 \item $S = S_1\times\dots\times S_n$, where 
	       $S_i$ is a nonempty set.
	 \item $f: S\mapsto\mathbf{R}^n$ is a function from $S$ to $\mathbf{R}^n$.
	\end{itemize}
  \item $R(T)=T=(N,S,f)$.
 \end{enumerate}
If the set $S_i$ is finite for all $i$, then we call the game $(T,R)$ a
 finite strategic game. We denote the set of all strategic games by \textbf{SG},
 and the set of all finite strategic games by \textbf{FSG}.
\end{defn}

\begin{defn}
 Let $G=((N,S,f),R)$ be a finite strategic game. Then the mixed extension of
 $G$ is a game $G^*=((N,S,f),R^*)$, where the rule $R^*$ is described
 as follows.
 \begin{itemize}
  \item $R^*(N,S,f)=(N,Q,F)$, where $Q$ and $F$ are of the following forms.
  \item $Q=Q_1\times\dots\times Q_n$, where $Q_i$ is the set of all
	probability distribution over $S_i$. 
  \item $F: Q\mapsto \mathbf{R}^n$ assigns to each $(q_1,\dots,q_n)\in
	Q$ the expected value of $f$. That is, the value of $F$ is given by 
	\begin{align}
	 &F(q_1,\dots,q_n)\nonumber\\
	 &\quad = \sum_{s_1\in S_1}\dots\sum_{s_n\in
	  S_n}\left\{\prod_{i=1}^n q_i(s_i)\right\}f(s_1,\dots,s_n), 
	\end{align}
	where $q_i(s_i)$ is the probability attached to $s_i$. 
 \end{itemize}
 We denote the set of all mixed extensions of finite strategic games by
 \textbf{MEFSG}. 
\end{defn}


\subsection{Quantum Simultaneous Games}
\begin{figure}
 \includegraphics{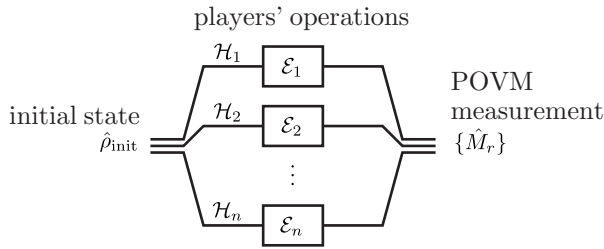}
 \caption{\label{figsim}The setup of a quantum simultaneous game. }
\end{figure}

Quantum simultaneous games are quantum games in which a quantum system
is used according to a protocol depicted in  Fig. \ref{figsim}. In
quantum simultaneous games, there
are $n$ players who can not communicate with each other, and a
referee. The referee prepares a quantum system
in the initial state $\hat{\rho}_{\mathrm{init}}$. The quantum system is
composed of $n$ subsystems, where the Hilbert space for the $i$-th subsystem
is $\mathcal{H}_i$. The referee provides player $i$ with the
$i$-th subsystem. Each player performs some quantum operation on the
provided subsystem. It is determined in advance which operations are
available for each player. After all players finish their operations, they
return the subsystems to the referee. Then the referee performs a POVM
measurement $\{\hat{M}_r\}$ on the total system. If the $r$-th measurement
outcome is obtained, player $i$ receives a payoff $a_r^i$. 

Many studies on quantum simultaneous games have been carried out. Early
significant studies include Refs.
\cite{Eisert1999,Eisert2000,Marinatto2000,Benjamin2001,Du2002}.

The protocol of the quantum simultaneous games is formulated in the form
of Definition \ref{DefGame} as below.
\begin{defn}
 A quantum simultaneous game is a game $(T,R)$ which has the following
 form.
 \begin{enumerate}
  \item
       $T=(N,\mathcal{H},\hat{\rho}_{\mathrm{init}},\Omega,\{\hat{M}_r\},\{\boldsymbol{a}_r\})$,
       and each component satisfies the following condition.
       \begin{itemize}
	\item $N=\{1,2,\dots,n\}$ is a finite set.
	\item
	      $\mathcal{H}=\mathcal{H}_1\otimes\mathcal{H}_2\otimes\dots\otimes\mathcal{H}_n$,
	      where $\mathcal{H}_i$ is a Hilbert space.
	\item $\hat{\rho}_{\mathrm{init}}$ is a density operator on
	      $\mathcal{H}$.
	\item $\Omega=\Omega_1\times\Omega_2\times\dots\times\Omega_n$,
	      where $\Omega_i$ is a subset of the set of all CPTP
	      (completely positive trace preserving) maps
	      on the set of density operators on $\mathcal{H}_i$. In
	      other words, $\Omega_i$ is a set of quantum operations
	      available for player $i$. 
	\item $\{\hat{M}_r\}$ is a POVM on $\mathcal{H}$.
	\item $\boldsymbol{a}_r=(a_r^1,a_r^2,\dots,a_r^n)
	      \in\mathbf{R}^n$. The index $r$ of $\boldsymbol{a}_r$ runs
	      over the same domain as that of $\hat{M}_r$. 
       \end{itemize}
  \item $R(T)=(N,\Omega,f)$. The value of $f$ is given by 
       \begin{equation}
	f(\mathcal{E}_1,\dots,\mathcal{E}_n) = \sum_r\boldsymbol{a}_r\mathrm{Tr}\left[\hat{M}_r(\mathcal{E}_1\otimes\dots\otimes\mathcal{E}_n)(\hat{\rho}_{\mathrm{init}})\right]
       \end{equation}
       for all $(\mathcal{E}_1,\dots,\mathcal{E}_n)\in\Omega$.
 \end{enumerate}
If $\mathcal{H}_i$ is finite dimensional for all $i$,
 then we refer to the game $(T,R)$ as a finite quantum simultaneous
 game. We denote the set of all quantum simultaneous games by \textbf{QSim}, and
 the set of all finite quantum simultaneous games by \textbf{FQSim}.
\end{defn}

\subsection{Quantum Sequential Games}
\begin{figure}
 \includegraphics{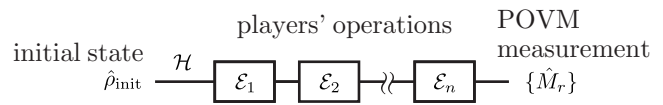}
 \caption{\label{figseq}The setup of a quantum sequential game.}
\end{figure}

Quantum sequential games are another type of quantum games, in which a
quantum system is used according to a protocol depicted in
Fig. \ref{figseq}. In quantum
sequential games, there are $n$ players who can not communicate each
other and a referee. The referee prepares a quantum system in the
initial state $\hat{\rho}_{\mathrm{init}}$. The players performs quantum
operations on the quantum system in turn. The order of the turn may be
regular like $1\to 2\to 3\to 1\to 2\to 3\to\cdots$, or may be
irregular like $1\to 3\to 2\to 3\to 1\to 2\to\cdots$, yet it is
determined in advance. After all the $m$ operations are finished, the
referee performs a POVM measurement $\{\hat{M}_r\}$. If the $r$-th measurement
outcome is obtained, then player $i$ receives a payoff $a_r^i$. 

Games which belong to quantum sequential games include $PQ$ penny
flipover \cite{Meyer1999}, quantum Monty Hall problem \cite{Flitney2002},
and quantum truel \cite{Flitney2004}. 

The protocol of the quantum sequential games is formulated as follows.
\begin{defn}\label{QSeqGame}
 A quantum sequential game is a game $(T,R)$ which has the following
 form.
 \begin{enumerate}
  \item
       $T=(N,\mathcal{H},\hat{\rho}_{\mathrm{init}},Q,\mu,\{\hat{M}_r\},\{\boldsymbol{a}_r\})$,
       and each component satisfies the following condition.
       \begin{itemize}
 	\item $N=\{1,2,\dots,n\}$ is a finite set.
	\item $\mathcal{H}$ is a Hilbert space.
	\item $\hat{\rho}_{\mathrm{init}}$ is a density operator on
	      $\mathcal{H}$.
	\item $Q=Q_1\times Q_2\times\dots\times Q_m$, where $Q_k$ is a
	      subset of the set of all CPTP maps on the set of density
	      operators on $\mathcal{H}$. The total number of operations
	      is denoted by $m$.
	\item $\mu$ is a bijection from $\bigcup_{i=1}^n\{(i,j)|1\le
	      j\le m_i\}$ to $\{1,\dots,m\}$, where $m_i$'s are natural
	      numbers satisfying $m_1+\dots + m_n=m$. The meaning of
	      $\mu$ is that the $j$-th operation for player $i$ is the
	      $\mu(i,j)$-th operation in total.
	\item $\{\hat{M}_r\}$ is a POVM on $\mathcal{H}$. 
	\item $\boldsymbol{a}_r=(a_r^1,a_r^2,\dots,a_r^n)
	      \in\mathbf{R}^n$. The index $r$ of $\boldsymbol{a}_r$ runs
	      over the same domain as that of $\hat{M}_r$. 
       \end{itemize}
  \item $R(T)=(N,\Omega,f)$. The strategy space
       $\Omega=\Omega_1\times\dots\times\Omega_n$ is constructed as 
       \begin{equation}
	\Omega_i=Q_{\mu(i,1)}\times Q_{\mu(i,2)}\times\dots\times Q_{\mu(i,m_i)}.
       \end{equation}
       The value of $f$ is given by 
       \begin{multline}
	f\left((\mathcal{E}_{\mu(1,1)},\dots,\mathcal{E}_{\mu(1,m_1)}),\dots,(\mathcal{E}_{\mu(n,1)},\dots,\mathcal{E}_{\mu(n,m_n)})\right)\\
	= \sum_r \boldsymbol{a}_r\mathrm{Tr}\left[\hat{M}_r \mathcal{E}_m\circ\mathcal{E}_{m-1}\circ\dots\circ\mathcal{E}_{1}(\hat{\rho}_{\mathrm{init}})\right]
       \end{multline}
       for all
       \begin{equation*}
	\left((\mathcal{E}_{\mu(1,1)},\dots,\mathcal{E}_{\mu(1,m_1)}),\dots,(\mathcal{E}_{\mu(n,1)},\dots,\mathcal{E}_{\mu(n,m_n)})\right)
       \end{equation*}
       in $\Omega$.
 \end{enumerate}
If $\mathcal{H}$ is finite dimensional,
 then we refer to the game $(T,R)$ as a finite quantum sequential
 game. We denote the set of all quantum sequential games by \textbf{QSeq}, and
 the set of all finite quantum sequential games by \textbf{FQSeq}.
\end{defn}

\section{\label{secEquivalence}Equivalence of games}
In this section, we define equivalence between two games. The basic idea
is that two games are equivalent if their normal forms have the same
structure, for the essence of a game is a game situation which is
modeled by a normal form. The difficulty of this idea is that a
strategy set $\Omega_i$ may have some redundancy; that is, two or more
elements in $\Omega_i$ may represent essentially the same strategy. If
this is the case, it does not work well to compare the strategy sets
directly to judge whether two games are equivalent or not. Instead, we
should define a new normal form in which the redundancy in the strategy
set is excluded from the original normal form, and then compare the new
normal forms of the two games. 

As the first step to define equivalence between games, we clarify what
it means by ``two elements in $\Omega_i$ represent essentially the same
strategy''.
\begin{defn}
 Let $(N, \Omega, f)$ be a normal form of a game. Two strategies
 $\omega_i,\omega_i'\in\Omega_i$ for player $i$ are said to be
 redundant if
 \begin{align}
  &f(\omega_1\dots\omega_{i-1},\omega_i,\omega_{i+1}\dots\omega_n)\nonumber\\
  &\quad =
    f(\omega_1\dots\omega_{i-1},\omega_i',\omega_{i+1}\dots\omega_n)
 \end{align}
  for all $\omega_1\in\Omega_1, \dots,\omega_{i-1}\in\Omega_{i-1},\
 \omega_{i+1}\in\Omega_{i+1},\dots,\omega_{n}\in\Omega_n$. If two
 strategies $\omega_i,\omega'_i\in\Omega_i$ are redundant, we write
 $\omega_i\sim\omega'_i$. 
\end{defn}

We can show that the binary relation $\sim$ is an \textit{equivalence
relation}. Namely, for all elements $\omega$, $\omega'$, and $\omega''$
of $\Omega_i$, the following holds:
\begin{enumerate}
 \item $\omega\sim\omega$.
 \item If $\omega\sim\omega'$ then $\omega'\sim\omega$.
 \item If $\omega\sim\omega'$ and $\omega'\sim\omega''$ then $\omega\sim\omega''$.
\end{enumerate}

Since $\sim$ is an equivalence relation, we can define the quotient set
$\tilde{\Omega}_i$ of a strategy set $\Omega_i$ by $\sim$. The quotient set
$\tilde{\Omega}_i$ is the set of all equivalence classes in
$\Omega_i$. An equivalence class in $\Omega_i$ is a subset of $\Omega_i$
which has the form of $\{\omega\,|\,\omega\in\Omega_i, a\sim\omega \}$,
where $a$ is an element of $\Omega_i$. We denote by $[\omega]$ an
equivalence class in which $\omega$ is included, and we define
$\tilde{\Omega}$ as
$\tilde{\Omega}\equiv\tilde{\Omega}_1\times\dots\times\tilde{\Omega}_n$.
This $\tilde{\Omega}$ is a new strategy set which has no redundancy.

Next, we define a new expected payoff function $\tilde{f}$ which maps
$\tilde{\Omega}$ to $\mathbf{R}^n$ by 
\begin{equation}
 \tilde{f}([\omega_1],\dots,[\omega_n]) = f(\omega_1,\dots,\omega_n).
\end{equation}
This definition says that for $(C_1,\dots,C_n)\in\tilde{\Omega}$, the
value of $\tilde{f}(C_1,\dots,C_n)$ is determined by taking one element
$\omega_i$ from each $C_i$ and evaluating $f(\omega_1,\dots,\omega_n)$. 

$\tilde{f}$ is well-defined. That is to say, the value of
$\tilde{f}(C_1,\dots,C_n)$ is independent of which element in $C_i$ one
would choose. To show this, suppose $(C_1,\dots,C_n)\in\tilde{\Omega}$
and $\alpha_i,\beta_i\in C_i$. Then $\alpha_i\sim\beta_i$ for every $i$, so that 
\begin{align}
 f(\alpha_1,\alpha_2,\alpha_3,\dots,\alpha_n) &= f(\beta_1,\alpha_2,\alpha_3,\dots,\alpha_n)\\
 &= f(\beta_1,\beta_2,\alpha_3,\dots,\alpha_n)\\
 &\quad\vdots \nonumber\\
 &= f(\beta_1,\beta_2,\beta_3,\dots,\beta_n).
\end{align}
Thus the value of $\tilde{f}(C_1,\dots,C_n)$ is determined uniquely.

Using $\tilde{\Omega}$ and $\tilde{f}$ constructed from the
original normal form $(N,\Omega,f)$, we define the new normal form as follows.
\begin{defn}
 Let $(N,\Omega,f)$ be the normal form of a game $G$. We refer to
 $(N,\tilde{\Omega},\tilde{f})$ as the reduced normal form of $G$. 
\end{defn}

Whether two games are equivalent or not is judged by comparing the
reduced normal forms of these games, as we mentioned earlier.
\begin{defn}\label{equivalent}
 Let $(N^{(1)}, \tilde{\Omega}^{(1)}, \tilde{f}^{(1)})$ be the reduced
 normal form of a game $G_1$, and let $(N^{(2)},\tilde{\Omega}^{(2)},
 \tilde{f}^{(2)})$ be the reduced normal form of a game $G_2$. Then,
 $G_1$ is said to be equivalent to $G_2$ if the following holds.
 \begin{enumerate}
  \item \label{equivcond1}$N^{(1)}=N^{(2)}=\{1,\dots,n\}$.
  \item \label{equivcond2}There exists a sequence $(\phi_1,\dots,\phi_n)$ of bijection $\phi_k:
	 \tilde{\Omega}_k^{(1)}\mapsto\tilde{\Omega}_k^{(2)}$, such that
	for all $(C_1,\dots,C_n) \in
	\tilde{\Omega}^{(1)}$
	\begin{equation}
	 \tilde{f}^{(1)}(C_1, \dots, C_n)
	  = \tilde{f}^{(2)}(\phi_1(C_1),\dots,
	  \phi_n(C_n)). \label{equivcond}
	\end{equation} 
 \end{enumerate}
 If $G_1$ is equivalent to $G_2$, we write $G_1\parallel G_2$. 
\end{defn}

To give an example of equivalent games, let us consider classical $PQ$
penny flipover \cite{Meyer1999}, in which both player $P$ and player $Q$
are classical players. 
In this game, a penny is placed initially heads up in a box. Players
take turns ($Q\to P\to Q$) flipping the penny over or not. Each player
can not know what the opponent did, nor see inside the box. Finally the
box is opened, and $Q$ wins if the penny is heads up. This game can be
formulated as a finite strategic game whose payoff matrix is given in
Table \ref{payoffmatrix}. 

\begin{table}
 \caption{\label{payoffmatrix}Payoff matrix for $PQ$ penny flipover. $F$
 denotes a flipover and $N$ denotes no flipover. The
 first entry in the parenthesis denotes $P$'s payoff and the second one
 denotes $Q$'s payoff.}
 \begin{ruledtabular}
 \begin{tabular}{ccccc}
  & $Q$: $NN$ & $Q$: $NF$ & $Q$: $FN$ & $Q$: $FF$\\
  \hline
  $P$: $N$ & $(-1,1)$ & $(1,-1)$ & $(1,-1)$ & $(-1,1)$\\
  $P$: $F$ & $(1,-1)$ & $(-1,1)$ & $(-1,1)$ & $(1,-1)$
 \end{tabular}
 \end{ruledtabular}
\end{table}

Intuitively, $Q$ does not benefit from the second move, so that it does
not matter whether $Q$ can do the second move or not. The notion of
equivalence captures this intuition; the above penny flipover game is
equivalent to a finite strategic game whose payoff matrix is given in
Table \ref{reducedpenny}. It represents another penny flipover game in which
both players act only once. Proof of the equivalence is easy and we omit
it. 

\begin{table}
 \caption{\label{reducedpenny}Payoff matrix for another $PQ$ penny
 flipover in which both players act only once.}
 \begin{ruledtabular}
  \begin{tabular}{ccc}
   & $Q$: $N$ & $Q$: $F$\\
   \hline
   $P$: $N$ & $(-1,1)$ & $(1,-1)$\\
   $P$: $F$ & $(1,-1)$ & $(-1,1)$
  \end{tabular}
 \end{ruledtabular}
\end{table}

We now return to the general discussion on the notion of equivalence. The following is a basic property of the equivalence between two games.
\begin{lemma}
 The binary relation $\parallel$ is an equivalence relation; namely, for
 any games $G_1$, $G_2$, and $G_3$, the following holds.
 \begin{enumerate}
  \item $G_1\parallel G_1$ (reflexivity).
  \item If $G_1\parallel G_2$, then $G_2\parallel G_1$ (symmetry).
  \item If $G_1\parallel G_2$ and $G_2\parallel G_3$, then $G_1\parallel
	G_3$ (transitivity).
 \end{enumerate}
\end{lemma}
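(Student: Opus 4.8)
The plan is to verify the three equivalence-relation axioms directly from Definition \ref{equivalent}, in each case by exhibiting the required sequence of bijections $(\phi_1,\dots,\phi_n)$ between the reduced strategy sets. The whole proof reduces to elementary manipulations of bijections, so I expect no genuine obstacle; the only point requiring a moment's care is keeping track of which reduced normal form plays which role.

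For reflexivity, let $(N,\tilde{\Omega},\tilde{f})$ be the reduced normal form of $G_1$. Condition \ref{equivcond1} holds trivially since both player sets are the same $N$. For condition \ref{equivcond2}, I would take each $\phi_k$ to be the identity map on $\tilde{\Omega}_k$, which is a bijection; then \eqref{equivcond} becomes $\tilde{f}(C_1,\dots,C_n)=\tilde{f}(C_1,\dots,C_n)$, which is immediate. Hence $G_1\parallel G_1$.

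For symmetry, suppose $G_1\parallel G_2$ via a sequence $(\phi_1,\dots,\phi_n)$ of bijections $\phi_k:\tilde{\Omega}_k^{(1)}\mapsto\tilde{\Omega}_k^{(2)}$ satisfying \eqref{equivcond}. Since each $\phi_k$ is a bijection, the inverse $\phi_k^{-1}:\tilde{\Omega}_k^{(2)}\mapsto\tilde{\Omega}_k^{(1)}$ exists and is a bijection. I would show that $(\phi_1^{-1},\dots,\phi_n^{-1})$ witnesses $G_2\parallel G_1$: given any $(D_1,\dots,D_n)\in\tilde{\Omega}^{(2)}$, set $C_k=\phi_k^{-1}(D_k)$, so that $\phi_k(C_k)=D_k$; substituting into \eqref{equivcond} and reading it from right to left gives $\tilde{f}^{(2)}(D_1,\dots,D_n)=\tilde{f}^{(1)}(\phi_1^{-1}(D_1),\dots,\phi_n^{-1}(D_n))$, which is exactly the condition for $G_2\parallel G_1$. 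The equality of player sets is symmetric, so condition \ref{equivcond1} also holds.

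For transitivity, suppose $G_1\parallel G_2$ via $(\phi_1,\dots,\phi_n)$ and $G_2\parallel G_3$ via $(\psi_1,\dots,\psi_n)$, with $\phi_k:\tilde{\Omega}_k^{(1)}\mapsto\tilde{\Omega}_k^{(2)}$ and $\psi_k:\tilde{\Omega}_k^{(2)}\mapsto\tilde{\Omega}_k^{(3)}$. The common player set is $\{1,\dots,n\}$ by the two hypotheses, giving condition \ref{equivcond1}. For condition \ref{equivcond2}, I would take the composite $\psi_k\circ\phi_k:\tilde{\Omega}_k^{(1)}\mapsto\tilde{\Omega}_k^{(3)}$, which is a bijection as a composition of bijections. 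Then for any $(C_1,\dots,C_n)\in\tilde{\Omega}^{(1)}$, applying \eqref{equivcond} for the first pair and then for the second yields
\begin{align}
 \tilde{f}^{(1)}(C_1,\dots,C_n)
 &= \tilde{f}^{(2)}(\phi_1(C_1),\dots,\phi_n(C_n))\nonumber\\
 &= \tilde{f}^{(3)}(\psi_1(\phi_1(C_1)),\dots,\psi_n(\phi_n(C_n))),\nonumber
\end{align}
which is precisely \eqref{equivcond} for the sequence $(\psi_1\circ\phi_1,\dots,\psi_n\circ\phi_n)$. Hence $G_1\parallel G_3$, completing the proof.
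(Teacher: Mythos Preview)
Your proof is correct and follows essentially the same approach as the paper's own proof: the paper likewise dismisses reflexivity as evident, establishes symmetry via the inverse bijections $\phi_k^{-1}$, and establishes transitivity via the compositions $\psi_k\circ\phi_k$. Your write-up is slightly more explicit in places (e.g., spelling out the identity maps for reflexivity and the substitution $C_k=\phi_k^{-1}(D_k)$ for symmetry), but the argument is the same.
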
 
\begin{proof}
 For the proof, let $(N^{(i)},\tilde{\Omega}^{(i)},\tilde{f}^{(i)})$ be
 the reduced normal form of $G_i$. 

 The reflexivity is evident.

 Let us prove the symmetry. Assume $G_1\parallel G_2$. Then $N^{(1)}=N^{(2)}$, and there exists a
 sequence $(\phi_1,\dots,\phi_n)$ of bijection $\phi_k:
 \tilde{\Omega}_k^{(1)}\mapsto\tilde{\Omega}_k^{(2)}$, such that 
 for all $(C_1,\dots,C_n) \in \tilde{\Omega}^{(1)}$,
 \begin{equation}
  \tilde{f}^{(1)}(C_1, \dots, C_n)
   = \tilde{f}^{(2)}(\phi_1(C_1),\dots,
   \phi_n(C_n)). \label{sym1}
 \end{equation} 
 Then, there exists a sequence $(\phi_1^{-1},\dots,\phi_n^{-1})$ of
 bijection
 $\phi_k^{-1}:\tilde{\Omega}_k^{(2)}\mapsto\tilde{\Omega}_k^{(1)}$, such
 that for any $(D_1,\dots,D_n)\in\tilde{\Omega}^{(2)}$,
 \begin{equation}
  \tilde{f}^{(2)}(D_1,\dots,D_n) = \tilde{f}^{(1)}(\phi_1^{-1}(D_1),\dots,\phi_n^{-1}(D_n)).
 \end{equation}
 Thus, $G_2\parallel G_1$. 

 We proceed to the proof of the transitivity. Assume $G_1\parallel G_2$ and
 $G_2\parallel G_3$. Then $N^{(1)}=N^{(2)}$ and $N^{(2)}=N^{(3)}$, which
 leads to $N^{(1)}=N^{(3)}$. Furthermore, (i) there exists a sequence
 $(\phi_1,\dots,\phi_n)$ of bijection $\phi_k:
 \tilde{\Omega}_k^{(1)}\mapsto\tilde{\Omega}_k^{(2)}$ such that for all
 $(C_1,\dots,C_n) \in \tilde{\Omega}^{(1)}$, 
 \begin{equation}
  \tilde{f}^{(1)}(C_1, \dots, C_n)
   = \tilde{f}^{(2)}(\phi_1(C_1),\dots,
   \phi_n(C_n)), 
 \end{equation} 
 and (ii) there exists a sequence
 $(\psi_1,\dots,\psi_n)$ of bijection $\psi_k:
 \tilde{\Omega}_k^{(2)}\mapsto\tilde{\Omega}_k^{(3)}$ such that for all
 $(D_1,\dots,D_n) \in \tilde{\Omega}^{(2)}$
 \begin{equation}
  \tilde{f}^{(2)}(D_1, \dots, D_n)
   = \tilde{f}^{(3)}(\psi_1(D_1),\dots,\psi_n(D_n)).
 \end{equation}
 Combining the statements (i) and (ii), we obtain the following statement:
 there exists a sequence $(\psi_1\circ\nolinebreak\phi_1,\dots,\psi_n\circ\phi_n)$
 of bijection $\psi_k\circ\phi_k:
 \tilde{\Omega}_k^{(1)}\mapsto\tilde{\Omega}_k^{(3)}$ such that for all
 $(C_1,\dots, C_n)\in\tilde{\Omega}^{(1)}$,
 \begin{align}
  &\tilde{f}^{(1)}(C_1,\dots,C_n) \nonumber\\
  &\quad =
   \tilde{f}^{(3)}(\psi_1\circ\phi_1(C_1),\dots,\psi_n\circ\phi_n(C_n)). 
 \end{align}
 Thus we conclude that $G_1\parallel G_3$. 
\end{proof}

In some cases, we can find that two games are equivalent by
comparing the normal forms of the games, not the reduced normal
forms. In the 
following lemma, sufficient conditions for such cases are presented.
\begin{lemma}\label{lemma}
 Let $(N^{(1)}, \Omega^{(1)}, f^{(1)})$ be the 
 normal form of a game $G_1$, and let $(N^{(2)},\Omega^{(2)},
 f^{(2)})$ be the normal form of a game $G_2$. If the following
 conditions are satisfied, $G_1$ is equivalent to $G_2$:
 \begin{enumerate}
  \item \label{lemmacond1} $N^{(1)}=N^{(2)}=\{1,\dots,n\}$.
  \item \label{lemmacond2} There exists a sequence $(\psi_1,\dots,\psi_n)$ of bijection $\psi_k:
	 \Omega_k^{(1)}\mapsto\Omega_k^{(2)}$, such that
	for all $(\omega_1,\dots,\omega_n) \in\Omega^{(1)}$,
	\begin{equation}
	 f^{(1)}(\omega_1, \dots, \omega_n)
	  = f^{(2)}(\psi_1(\omega_1),\dots,
	  \psi_n(\omega_n)). \label{lemmapsi}
	\end{equation}
 \end{enumerate}
\end{lemma}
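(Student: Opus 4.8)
The plan is to show that the given bijections $\psi_k$ on the full strategy sets descend to bijections $\phi_k$ on the quotient strategy sets $\tilde{\Omega}_k$, and that these descended maps witness the equivalence in the sense of Definition \ref{equivalent}. Condition \ref{equivcond1} of that definition is immediate from hypothesis \ref{lemmacond1}, so the entire task reduces to constructing the $\phi_k$ and verifying the reduced payoff identity \eqref{equivcond}.

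The crux of the argument---and the step I expect to be the main obstacle---is establishing that each $\psi_k$ is compatible with the redundancy relation $\sim$ in both directions, i.e. that $\omega_i\sim\omega_i'$ in $G_1$ holds if and only if $\psi_i(\omega_i)\sim\psi_i(\omega_i')$ in $G_2$. For the forward direction I would fix an arbitrary profile $(\eta_1,\dots,\eta_{i-1},\eta_{i+1},\dots,\eta_n)$ of the other players' strategies in $G_2$. Since each $\psi_j$ with $j\neq i$ is surjective, each $\eta_j$ equals $\psi_j(\omega_j)$ for some $\omega_j\in\Omega_j^{(1)}$. Substituting into \eqref{lemmapsi} twice, once with $\omega_i$ and once with $\omega_i'$ in the $i$-th slot, rewrites the two corresponding values of $f^{(2)}$ as values of $f^{(1)}$; these agree because $\omega_i\sim\omega_i'$. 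As the $G_2$-profile was arbitrary, this is precisely $\psi_i(\omega_i)\sim\psi_i(\omega_i')$. The reverse implication follows by applying the same reasoning to the inverse maps $\psi_k^{-1}$, which satisfy the analogue of \eqref{lemmapsi} with $G_1$ and $G_2$ interchanged. It is here that the bijectivity of the $\psi_j$ for $j\neq i$ is genuinely needed: it lets one transport the universal quantifier over opponents' strategies from one game to the other.

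With this compatibility secured, the remainder is routine. I would define $\phi_k([\omega_k])=[\psi_k(\omega_k)]$; the forward compatibility makes this independent of the chosen representative, so $\phi_k$ is well defined, and the two-sided compatibility together with the bijectivity of $\psi_k$ yields both injectivity and surjectivity of $\phi_k$. Finally, for any $([\omega_1],\dots,[\omega_n])\in\tilde{\Omega}^{(1)}$ I would chain the definitions of $\tilde{f}^{(1)}$ and $\tilde{f}^{(2)}$ with \eqref{lemmapsi}, obtaining $\tilde{f}^{(1)}([\omega_1],\dots,[\omega_n])=f^{(1)}(\omega_1,\dots,\omega_n)=f^{(2)}(\psi_1(\omega_1),\dots,\psi_n(\omega_n))=\tilde{f}^{(2)}(\phi_1([\omega_1]),\dots,\phi_n([\omega_n]))$, which is exactly \eqref{equivcond}. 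Hence $G_1\parallel G_2$.
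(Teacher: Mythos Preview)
Your proposal is correct and follows essentially the same approach as the paper. The only difference is organizational: the paper defines $\phi_i(C_i)$ as the set-image $\{\psi_i(\omega')\mid\omega'\in C_i\}$ and then proves the two inclusions $\phi_i([\omega_i])\subset[\psi_i(\omega_i)]$ and $[\psi_i(\omega_i)]\subset\phi_i([\omega_i])$, which amounts to exactly the two-sided compatibility of $\psi_i$ with $\sim$ that you isolate upfront; your packaging is arguably cleaner, but the content and the use of surjectivity of the $\psi_j$ for $j\neq i$ are identical.
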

\begin{proof}
 We will show that if the above conditions are satisfied, the conditions
 in the definition \ref{equivalent} are also satisfied.

 From the condition \ref{lemmacond1} in the lemma, the condition
 \ref{equivcond1} in the definition \ref{equivalent} is obviously
 satisfied. 
 
 To show that the condition \ref{equivcond2} in the definition
 \ref{equivalent} is also satisfied, we define a map $\phi_i$ from
 $\tilde{\Omega}^{(1)}_i$ to the set of all subsets of $\Omega_i^{(2)}$ as
 \begin{equation}
   \phi_i(C_i) = \{\psi_i(\omega')\,|\,\omega'\in C_i\}.
 \end{equation}
 We will show that $(\phi_1,\dots,\phi_n)$ is a sequence which satisfies
 the condition \ref{equivcond2} in the definition \ref{equivalent}.

 First, we show that the range of $\phi_i$ is a subset of
 $\tilde{\Omega}^{(2)}_i$; that is, for any
 $[\omega_i]\in\tilde{\Omega}^{(1)}_i$ there exists
 $\xi_i\in\Omega^{(2)}_i$ such that
 $\phi_i([\omega_i]) = [\xi_i]$. In fact, $\psi_i(\omega_i)$
 is such a $\xi_i$:
 \begin{equation}
  \phi_i([\omega_i]) = [\psi_i(\omega_i)]. \label{hoge}
 \end{equation}
 Below, we will prove $\phi_i([\omega_i])
 \subset [\psi_i(\omega_i)]$ first, and then prove
 $[\psi_i(\omega_i)]\subset\phi_i([\omega_i])$. 

 To prove $\phi_i([\omega_i]) \subset [\psi_i(\omega_i)]$, we will show
 that an arbitrary element $\sigma_i\in\phi_i([\omega_i])$ satisfies
 $\sigma_i\in [\psi_i(\omega_i)]$. For this purpose, it is sufficient to
 show that 
 $\sigma_i\sim\psi_i(\omega_i)$; that is, for an arbitrary
 $\sigma_k\in\Omega^{(2)}_k$ $(k\neq i)$
 \begin{align}
  &f^{(2)}(\sigma_1,\dots,\sigma_{i-1},\sigma_i,\sigma_{i+1},\dots,\sigma_n)\nonumber\\
   &\quad =
   f^{(2)}(\sigma_1,\dots,\sigma_{i-1},\psi_i(\omega_i),\sigma_{i+1},\dots,\sigma_{n}). \label{fff}
 \end{align}
 Since $\psi_k$ is a bijection,
 there exists $\omega_k\in\Omega_k^{(1)}$ such that
 $\psi_k(\omega_k)=\sigma_k$. In addition, because
 $\sigma_i\in\phi_i([\omega_i])$, there exists
 $\omega'_i\in [\omega_i]$ such that $\sigma_i =
 \psi_i(\omega'_i)$. Thus, 
 \begin{align}
  &
  f^{(2)}(\sigma_1,\dots,\sigma_{i-1},\sigma_i,\sigma_{i+1},\dots,\sigma_n)
  \nonumber\\
  &
  \begin{aligned}
   = 
  f^{(2)}(\psi_1(\omega_1),\dots,&\psi_{i-1}(\omega_{i-1}),\psi_i(\omega'_i),\\
    &\psi_{i+1}(\omega_{i+1}),\dots,\psi_n(\omega_n))
  \end{aligned}
  \nonumber\\
  &=
  f^{(1)}(\omega_1,\dots,\omega_{i-1},\omega'_i,\omega_{i+1},\dots,\omega_n). \label{lemmaeq2}
 \end{align}
 The last equation follows from \eqref{lemmapsi}. Because
 $\omega'_i\in [\omega_i]$ and $\omega_i\in [\omega_i]$, it follows that
 $\omega'_i \sim \omega_i$. Hence, 
 \begin{align}
  \eqref{lemmaeq2} &=
  f^{(1)}(\omega_1,\dots,\omega_{i-1},\omega_i,\omega_{i+1},\dots,\omega_n)\nonumber\\
  &
  \begin{aligned}
  = 
  f^{(2)}(\psi_1(\omega_1),\dots,&\psi_{i-1}(\omega_{i-1}),\psi_i(\omega_i),\\
   &\psi_{i+1}(\omega_{i+1}),\dots,\psi_n(\omega_n))
  \end{aligned}
  \nonumber\\
  &=
  f^{(2)}(\sigma_1,\dots,\sigma_{i-1},\psi_i(\omega_i),\sigma_{i+1},\dots,\sigma_{n}),\label{lemmaeq3}
 \end{align}
 which leads to the conclusion that the equation \eqref{fff} holds for any $\sigma_i\in\phi_i([\omega_i])$.

 Conversely, we can show that
 $[\psi_i(\omega_i)]\subset\phi_i([\omega_i])$. Let $\sigma_i$ be
 an arbitrary element of $[\psi_i(\omega_i)]$. Since $\psi_i$ is a
 bijection, there exists $\omega'_i\in\Omega_i^{(1)}$ such that
 $\psi_i(\omega'_i)=\sigma_i$. For such $\omega'_i$, it holds that
 $\psi_i(\omega'_i)\sim\psi_i(\omega_i)$, because $\psi_i(\omega'_i)\in
 [\psi_i(\omega_i)]$. Hence,
\begin{align}
 &
 f^{(1)}(\omega_1,\dots,\omega_{i-1},\omega'_i,\omega_{i+1},\dots,\omega_n)\nonumber\\
 &
 \begin{aligned}
 =
 f^{(2)}(\psi_1(\omega_1),\dots,&\psi_{i-1}(\omega_{i-1}),\psi_i(\omega'_i),\\
  &\psi_{i+1}(\omega_{i+1}),\dots,\psi_n(\omega_n))
 \end{aligned}
 \nonumber\\
 &
 \begin{aligned}
 =
 f^{(2)}(\psi_1(\omega_1),\dots,&\psi_{i-1}(\omega_{i-1}),\psi_i(\omega_i),\\
  &\psi_{i+1}(\omega_{i+1}),\dots,\psi_n(\omega_n))
 \end{aligned}
 \nonumber\\ 
 &= f^{(1)}(\omega_1,\dots,\omega_{i-1},\omega_i,\omega_{i+1},\dots,\omega_n),
\end{align}
which indicates that $\omega'_i\sim\omega_i$. Thus,
 $\omega'_i\in [\omega_i]$. Therefore, we conclude that if
 $\sigma_i\in [\psi_i(\omega_i)]$, then
 $\sigma_i=\psi_i(\omega'_i)\in\phi_i([\omega_i])$; that is,
 $[\psi_i(\omega_i)]\subset\phi_i([\omega_i])$. 

 We have shown above that $\phi_i$ is a map from
 $\tilde{\Omega}_i^{(1)}$ to $\tilde{\Omega}_i^{(2)}$. The next thing we
 have to show is that $\phi_i$ is a bijection from
 $\tilde{\Omega}_i^{(1)}$ to $\tilde{\Omega}_i^{(2)}$. We will show the
 bijectivity of $\phi_i$ by proving injectivity and surjectivity
 separately.

 First, we show that $\phi_i$ is injective. Suppose
 $[\omega_i],[\omega'_i]\in\tilde{\Omega}_i^{(1)}$ and
 $[\omega_i]\neq [\omega'_i]$.  Because
 $[\omega_i]\neq [\omega'_i]$, it follows that
 $\omega_i\nsim\omega'_i$, so that there exists
 $(\omega_1,\dots,\omega_{i-1},\omega_{i+1},\dots,\omega_n)\in\Omega_1^{(1)}\times\dots\times\Omega_{i-1}^{(1)}\times\Omega_{i+1}^{(1)}\times\dots\times\Omega_n^{(1)}$
 such that
 \begin{align}
  &f^{(1)}(\omega_1,\dots,\omega_{i-1},\omega_i,\omega_{i+1},\dots,\omega_n)\nonumber\\
   &\quad \neq
   f^{(1)}(\omega_1,\dots,\omega_{i-1},\omega'_i,\omega_{i+1},\dots,\omega_n). 
 \end{align}
 For such $(\omega_1,\dots,\omega_{i-1},\omega_{i+1},\dots,\omega_n)$, 
 \begin{align}
  &
  \begin{aligned}
  f^{(2)}(\psi_1(\omega_1),\dots,&\psi_{i-1}(\omega_{i-1}),\psi_i(\omega_i),\\
   &\psi_{i+1}(\omega_{i+1}),\dots,\psi_n(\omega_n))
  \end{aligned}
  \nonumber\\
  &=
  f^{(1)}(\omega_1,\dots,\omega_{i-1},\omega_i,\omega_{i+1},\dots,\omega_n)\nonumber\\
  &\neq
  f^{(1)}(\omega_1,\dots,\omega_{i-1},\omega'_i,\omega_{i+1},\dots,\omega_n)\nonumber\\
  &
  \begin{aligned}
  =
  f^{(2)}(\psi_1(\omega_1),\dots,&\psi_{i-1}(\omega_{i-1}),\psi_i(\omega'_i),\\
   &\psi_{i+1}(\omega_{i+1}),\dots,\psi_n(\omega_n)).
  \end{aligned}
 \end{align}
 This indicates that $\psi_i(\omega_i)\nsim\psi_i(\omega'_i)$. Hence,
 $[\psi_i(\omega_i)]\neq [\psi_i(\omega'_i)]$. Thus,
 using \eqref{hoge}, we conclude that $\phi_i([\omega_i])\neq\phi_i([\omega'_i])$.

 Next, we show that $\phi_i$ is surjective. Let $[\sigma]$ be an
 arbitrary element of $\tilde{\Omega}_i^{(2)}$. Define
 $\omega\in\Omega_i^{(1)}$ as $\omega\equiv\psi_i^{-1}(\sigma)$. Then,
 \begin{equation}
  \phi_i([\omega]) = [\psi_i(\omega)] = [\sigma].
 \end{equation}
 The first equation follows from \eqref{hoge}. Thus, for an arbitrary
 $[\sigma]\in \tilde{\Omega}_i^{(2)}$, there exists
 $[\omega]\in\tilde{\Omega}_i^{(1)}$ such that
 $\phi_i([\omega])=[\sigma]$. 

 Lastly, we show that $(\phi_1,\dots,\phi_n)$ satisfies
 \eqref{equivcond}. For an arbitrary
 $([\omega_1],\dots,[\omega_n])\in\tilde{\Omega}^{(1)}$, 
 \begin{align}
  \tilde{f}^{(1)}([\omega_1],\dots,[\omega_n]) &=
  f^{(1)}(\omega_1,\dots,\omega_n)\label{leq1}\\
  &= f^{(2)}(\psi_1(\omega_1),\dots,\psi_n(\omega_n))\label{leq2}\\
  &= \tilde{f}^{(2)}([\psi_1(\omega_1)],\dots,[\psi_n(\omega_n)])\label{leq3}\\
  &=
  \tilde{f}^{(2)}(\phi_1([\omega_1]),\dots,\phi_n([\omega_n])).\label{leq4}
 \end{align}
 Equations \eqref{leq1} and \eqref{leq3} follow from the definition of
 $\tilde{f}^{(1)}$ and $\tilde{f}^{(2)}$. Equation \eqref{leq2} follows
 from \eqref{lemmapsi}. The last equation follows from \eqref{hoge}. 
\end{proof}

\section{\label{GameClass}Game Classes}
This short section is devoted to explaining game classes and some binary
relations between game classes. These notions simplify the statements of
our main theorems.

First, we define a game class as a subset of \textbf{G}.
We defined previously \textbf{G}, \textbf{SG}, \textbf{FSG},
\textbf{MEFSG}, \textbf{QSim}, \textbf{FQSim}, \textbf{QSeq}, and
\textbf{FQSeq}. All of these are game classes. Note that \textbf{G}
is itself a game class.

Next, we introduce some symbols. Let \textbf{A} and \textbf{B} be game
classes. If for any game $G\in\mathbf{A}$ there 
exists a game $G'\in\mathbf{B}$ such that $G\parallel G'$, then
we write $\mathbf{A} \trianglelefteq \mathbf{B}$. If
$\mathbf{A}\trianglelefteq \mathbf{B}$ and
$\mathbf{B}\trianglelefteq \mathbf{A}$, we write $\mathbf{A}\bowtie\mathbf{B}$. If $\mathbf{A}\trianglelefteq \mathbf{B}$ but
$\mathbf{B}\ntrianglelefteq \mathbf{A}$, we write $\mathbf{A} \vartriangleleft \mathbf{B}$.

Lastly, we prove the following lemma.
\begin{lemma}\label{preorder}
 The binary relation $\trianglelefteq$ is a preorder. Namely, for any
 game classes $\mathbf{A}$, $\mathbf{B}$, and $\mathbf{C}$, the following
 holds.
 \begin{enumerate}
  \item $\mathbf{A}\trianglelefteq\mathbf{A}$ (reflexivity).
  \item If $\mathbf{A}\trianglelefteq\mathbf{B}$ and
	$\mathbf{B}\trianglelefteq\mathbf{C}$, then
	$\mathbf{A}\trianglelefteq\mathbf{C}$ (transitivity). 
 \end{enumerate}
\end{lemma}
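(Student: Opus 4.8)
The plan is to reduce both claims to properties of $\parallel$ that have already been established in the preceding lemma, namely that $\parallel$ is reflexive and transitive. Recall that by definition $\mathbf{A}\trianglelefteq\mathbf{B}$ means that every $G\in\mathbf{A}$ admits some $G'\in\mathbf{B}$ with $G\parallel G'$; both parts of the present lemma are then obtained by manipulating this existential statement together with the equivalence-relation structure of $\parallel$.

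For reflexivity, I would take an arbitrary game $G\in\mathbf{A}$ and exhibit $G$ itself as the required witness inside $\mathbf{A}$. Since $G\parallel G$ by the reflexivity of $\parallel$, and trivially $G\in\mathbf{A}$, the defining condition of $\mathbf{A}\trianglelefteq\mathbf{A}$ is satisfied for this $G$. As $G$ was arbitrary, $\mathbf{A}\trianglelefteq\mathbf{A}$ follows immediately.

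For transitivity, I would assume $\mathbf{A}\trianglelefteq\mathbf{B}$ and $\mathbf{B}\trianglelefteq\mathbf{C}$, fix an arbitrary $G\in\mathbf{A}$, and chain the two hypotheses. The first hypothesis yields some $G'\in\mathbf{B}$ with $G\parallel G'$. Since $G'\in\mathbf{B}$, applying the second hypothesis to $G'$ yields some $G''\in\mathbf{C}$ with $G'\parallel G''$. Invoking the transitivity of $\parallel$ on $G\parallel G'$ and $G'\parallel G''$ gives $G\parallel G''$, so $G''$ is a witness in $\mathbf{C}$ for $G$. As $G\in\mathbf{A}$ was arbitrary, this establishes $\mathbf{A}\trianglelefteq\mathbf{C}$.

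Since each part follows directly from the corresponding property of $\parallel$, I do not expect any substantive obstacle. The only point demanding care is the correct instantiation of the nested existential quantifiers in the definition of $\trianglelefteq$ during the transitivity argument, ensuring that the intermediate game $G'$ is genuinely a member of $\mathbf{B}$ before the second hypothesis is applied to it, and then invoking the already-proven transitivity of $\parallel$ at precisely that step.
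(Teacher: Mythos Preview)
Your proposal is correct and follows essentially the same route as the paper: the paper also declares reflexivity evident and proves transitivity by chaining the two hypotheses and then invoking the transitivity of $\parallel$. The only minor difference is that you spell out the reflexivity step (via $G\parallel G$) where the paper simply calls it evident.
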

\begin{proof}
 The reflexivity is evident. So we concentrate on proving the
 transitivity. 

 Assume $\mathbf{A}\trianglelefteq\mathbf{B}$ and
 $\mathbf{B}\trianglelefteq\mathbf{C}$. Because
 $\mathbf{A}\trianglelefteq\mathbf{B}$, for any $G_a\in\mathbf{A}$ there
 exists $G_b\in\mathbf{B}$ such that $G_a\parallel G_b$. For such $G_b$, 
 there exists $G_c\in\mathbf{C}$
 such that $G_b\parallel G_c$, since $\mathbf{B}\trianglelefteq\mathbf{C}$. Using the transitivity of the relation
 $\parallel$, we conclude that for any $G_a\in\mathbf{A}$ there 
 exists $G_c\in\mathbf{C}$ such that $G_a\parallel G_c$; that is,
 $\mathbf{A}\trianglelefteq\mathbf{C}$. 
\end{proof}

\section{\label{MainTheorems}Main Theorems}
In this section, we examine relationships between game classes
\textbf{QSim}, \textbf{QSeq}, \textbf{FQSim}, and \textbf{FQSeq}.
\begin{thm}
 $\mathbf{QSim}\trianglelefteq\mathbf{QSeq}$.
\end{thm}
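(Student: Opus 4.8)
The plan is to exploit the fact that in a simultaneous game the players act on \emph{disjoint} tensor factors, so their joint operation $\mathcal{E}_1\otimes\dots\otimes\mathcal{E}_n$ can equally well be performed one player at a time. Given a quantum simultaneous game $G=(T,R)$ with $T=(N,\mathcal{H},\hat{\rho}_{\mathrm{init}},\Omega,\{\hat{M}_r\},\{\boldsymbol{a}_r\})$, I would build a quantum sequential game $G'$ on the \emph{same} Hilbert space $\mathcal{H}$ and with the same initial state, POVM, and payoff vectors, in which each player acts exactly once in the order $1\to2\to\dots\to n$. Concretely I take $m=n$, $m_i=1$ for all $i$, and $\mu(i,1)=i$, and I let player $i$'s available operations be the ``extended'' maps $\hat{\mathcal{E}}_i := \mathrm{id}_1\otimes\dots\otimes\mathrm{id}_{i-1}\otimes\mathcal{E}_i\otimes\mathrm{id}_{i+1}\otimes\dots\otimes\mathrm{id}_n$ on all of $\mathcal{H}$, with $\mathcal{E}_i$ ranging over $\Omega_i$. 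Each such $\hat{\mathcal{E}}_i$ is CPTP, being a tensor product of CPTP maps, so $G'$ is a legitimate quantum sequential game, hence a member of $\mathbf{QSeq}$.

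The heart of the argument is the factorization $\mathcal{E}_1\otimes\dots\otimes\mathcal{E}_n = \hat{\mathcal{E}}_n\circ\dots\circ\hat{\mathcal{E}}_1$, which holds because the extended maps act on disjoint factors, so they commute and compose to the tensor product. With it, the payoff function $f'$ of $G'$, evaluated at $(\hat{\mathcal{E}}_1,\dots,\hat{\mathcal{E}}_n)$, reduces to $\sum_r\boldsymbol{a}_r\mathrm{Tr}[\hat{M}_r(\mathcal{E}_1\otimes\dots\otimes\mathcal{E}_n)(\hat{\rho}_{\mathrm{init}})]$, which is exactly $f(\mathcal{E}_1,\dots,\mathcal{E}_n)$. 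Note that the choice of order in $\mu$ is immaterial here, precisely because the extended maps commute; the regular order is chosen only for concreteness.

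To conclude equivalence I would invoke Lemma \ref{lemma}. Define $\psi_i\colon\Omega_i\to\Omega_i'$ by $\psi_i(\mathcal{E}_i)=\hat{\mathcal{E}}_i$. This is surjective by the very construction of the operation sets of $G'$, and injective because $\mathcal{E}_i$ can be recovered from $\hat{\mathcal{E}}_i$: feeding in a product state $\sigma_1\otimes\dots\otimes\sigma_n$ and tracing out every subsystem except the $i$-th returns $\mathcal{E}_i(\sigma_i)$. Thus each $\psi_i$ is a bijection, condition \ref{lemmacond1} of Lemma \ref{lemma} holds because $N^{(1)}=N^{(2)}=\{1,\dots,n\}$, and the payoff identity of the previous paragraph is exactly condition \ref{lemmacond2}. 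Hence $G\parallel G'$, and since $G$ was an arbitrary element of $\mathbf{QSim}$, this yields $\mathbf{QSim}\trianglelefteq\mathbf{QSeq}$.

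The main obstacle is not a deep difficulty but a careful bookkeeping point: I must verify rigorously that ``tensor product'' equals ``sequential composition of extensions'', i.e. the identity $\mathcal{E}_1\otimes\dots\otimes\mathcal{E}_n=\hat{\mathcal{E}}_n\circ\dots\circ\hat{\mathcal{E}}_1$, and that the extension map $\mathcal{E}_i\mapsto\hat{\mathcal{E}}_i$ is injective so that $\psi_i$ is genuinely a bijection onto $\Omega_i'$ and not merely a surjection. Both are elementary facts about CPTP maps on tensor products, but together they are what make the sequential simulation \emph{exact} and let Lemma \ref{lemma} apply.
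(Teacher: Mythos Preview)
Your proposal is correct and follows essentially the same approach as the paper: the paper also constructs the sequential game on the same Hilbert space with $m=n$, $m_i=1$, $\mu(i,1)=i$, and the extended operations $\mathcal{I}\otimes\dots\otimes\mathcal{E}_i\otimes\dots\otimes\mathcal{I}$, and then applies Lemma~\ref{lemma} via the map $\psi_i(\mathcal{E}_i)=\hat{\mathcal{E}}_i$. If anything, you supply more detail than the paper does---the paper simply asserts that $\psi_i$ is ``clearly a bijection'' and that the payoff identity holds, whereas you spell out the factorization $\mathcal{E}_1\otimes\dots\otimes\mathcal{E}_n=\hat{\mathcal{E}}_n\circ\dots\circ\hat{\mathcal{E}}_1$ and the injectivity argument explicitly.
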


\begin{figure}
 \includegraphics{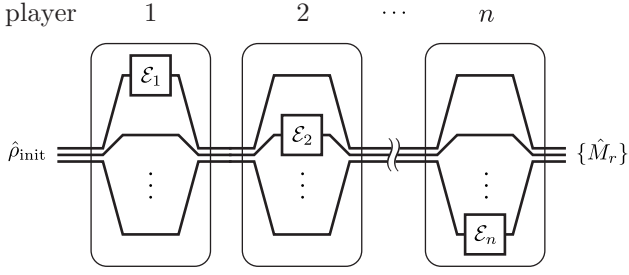}
 \caption{\label{figthm1}A quantum sequential game $G^{\mathrm{seq}}$
 which is equivalent to a quantum simultaneous game $G$ depicted in Fig. \ref{figsim}.}
\end{figure}

\begin{proof}
  We prove the theorem by constructing a quantum sequential game
 $G^{\mathrm{seq}}$ equivalent to a 
 given quantum simultaneous game $G$. We show the construction procedure
 of $G^{\mathrm{seq}}$
 first, and then prove the equivalence using Lemma \ref{lemma}. 
 
 The idea for the construction of $G^{\mathrm{seq}}$ is that a quantum
 simultaneous game depicted in Fig. \ref{figsim} can always be seen as a
 quantum sequential game, as indicated in Fig. \ref{figthm1}. 

 Suppose $G$ is in the following form:
 \begin{gather}
  G=(T,R),\\
  T=(N,\mathcal{H},\hat{\rho}_{\mathrm{init}},\Omega,\{\hat{M}_r\},\{\boldsymbol{a}_r\}),\\
  R(T)=(N,\Omega,f).
 \end{gather}
 Furthermore, suppose $G^{\mathrm{seq}}$ to be constructed is in the
 following form:
 \begin{gather}
  G^{\mathrm{seq}}=(T^{\mathrm{seq}},R^{\mathrm{seq}}),\\
  T^{\mathrm{seq}}=(N^{\mathrm{seq}},\mathcal{H}^{\mathrm{seq}},\hat{\rho}^{\mathrm{seq}}_{\mathrm{init}},Q^{\mathrm{seq}},\mu^{\mathrm{seq}},\{\hat{M}_r^{\mathrm{seq}}\},\{\boldsymbol{a}_r^{\mathrm{seq}}\}),\\
  R^{\mathrm{seq}}(T^{\mathrm{seq}})=(N^{\mathrm{seq}}, \Omega^{\mathrm{seq}}, f^{\mathrm{seq}}).
 \end{gather}
 We construct each component of $T^{\mathrm{seq}}$ from $G$ as follows.
 \begin{itemize}
  \item $N^{\mathrm{seq}}=N=\{1,2,\dots,n\}$.
  \item $\mathcal{H}^{\mathrm{seq}} = \mathcal{H} =
	\mathcal{H}_1\otimes\mathcal{H}_2\otimes\dots\otimes\mathcal{H}_n$.
  \item $\hat{\rho}_{\mathrm{init}}^{\mathrm{seq}}=\hat{\rho}_{\mathrm{init}}$.
  \item $Q^{\mathrm{seq}}=Q_1^{\mathrm{seq}}\times
	Q_2^{\mathrm{seq}}\times\dots\times Q_n^{\mathrm{seq}}$, where 
	\begin{equation*}
	 Q_i^{\mathrm{seq}} =
	\{\mathcal{I}\otimes\dots\otimes\mathcal{I}\otimes(\mathcal{E})_i\otimes\mathcal{I}\otimes\dots\otimes\mathcal{I}\,|\,\mathcal{E}\in\Omega_i\}.
	\end{equation*}
	Here, $\mathcal{I}$ is the identity superoperator.
  \item $\mu^{\mathrm{seq}}$ is a map from $\{(i,1)\,|\,1\le i\le n\}$ to
	$\{1,2,\dots,n\}$. The value of $\mu^{\mathrm{seq}}$ is defined by
	$\mu^{\mathrm{seq}}(i,1)=i$. 
  \item $\{\hat{M}_r^{\mathrm{seq}}\}=\{\hat{M}_r\}$.
  \item $\{\boldsymbol{a}_r^{\mathrm{seq}}\}=\{\boldsymbol{a}_r\}$. 
\end{itemize}
Note that $\Omega_i^{\mathrm{seq}}=Q_i$ because of the construction of
 $Q^{\mathrm{seq}}$ and $\mu^{\mathrm{seq}}$. 

Next, we prove that $G^{\mathrm{seq}}$ constructed above is equivalent
 to $G$, using Lemma \ref{lemma}. From $N^{\mathrm{seq}}=N$, condition
 \ref{lemmacond1} of the Lemma \ref{lemma} is satisfied. To show that condition
 \ref{lemmacond2} of the Lemma \ref{lemma} is also satisfied, we define a map
 $\psi_i: \Omega_i\mapsto\Omega_i^{\mathrm{seq}}$ by
 \begin{equation}
  	\psi_i(\mathcal{E})\equiv
	\mathcal{I}\otimes\dots\otimes\mathcal{I}\otimes
	(\mathcal{E})_i\otimes\mathcal{I}\otimes\dots\otimes\mathcal{I}.
 \end{equation}
 Then, $\psi_i$ is clearly a bijection. Furthermore, for any
 $(\mathcal{E}_1,\dots,\mathcal{E}_n)\in\Omega$, 
 \begin{equation}
  f(\mathcal{E}_1,\dots,\mathcal{E}_n) =
   f^{\mathrm{seq}}(\psi_1(\mathcal{E}_1),\dots,\psi_n(\mathcal{E}_n)).
 \end{equation}
 Thus condition \ref{lemmacond2} of the Lemma \ref{lemma} is satisfied. 
\end{proof}

From the above proof, we can easily see that the following theorem is
also true. 
\begin{thm}\label{Theorem2}
 $\mathbf{FQSim}\trianglelefteq\mathbf{FQSeq}$.
\end{thm}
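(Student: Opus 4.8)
The plan is to reuse the construction from the proof of the preceding theorem essentially verbatim, and then to verify the single extra finiteness requirement. Given a finite quantum simultaneous game $G\in\mathbf{FQSim}$, I would form exactly the same sequential game $G^{\mathrm{seq}}$ as before: take $\mathcal{H}^{\mathrm{seq}}=\mathcal{H}=\mathcal{H}_1\otimes\dots\otimes\mathcal{H}_n$, keep $\hat{\rho}_{\mathrm{init}}$, $\{\hat{M}_r\}$, and $\{\boldsymbol{a}_r\}$ unchanged, embed each $\Omega_i$ into operations on the whole space through the identity on the remaining tensor factors, and let player $i$ act at the single step $\mu^{\mathrm{seq}}(i,1)=i$. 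Because none of these definitions is altered, the equivalence $G\parallel G^{\mathrm{seq}}$ follows by the same appeal to Lemma \ref{lemma}, using the same bijections $\psi_i$ and the same payoff identity.

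The only new thing to check is that $G^{\mathrm{seq}}$ belongs to $\mathbf{FQSeq}$, not merely to $\mathbf{QSeq}$ --- that is, that $\mathcal{H}^{\mathrm{seq}}$ is finite dimensional. This is exactly where the extra hypothesis on $G$ enters: by definition of a finite quantum simultaneous game, every factor $\mathcal{H}_i$ is finite dimensional, and since $N=\{1,\dots,n\}$ is finite, the identity $\dim\mathcal{H}^{\mathrm{seq}}=\prod_{i=1}^n\dim\mathcal{H}_i$ expresses $\dim\mathcal{H}^{\mathrm{seq}}$ as a finite product of finite numbers. Hence $\mathcal{H}^{\mathrm{seq}}$ is finite dimensional, so $G^{\mathrm{seq}}\in\mathbf{FQSeq}$, and we conclude $\mathbf{FQSim}\trianglelefteq\mathbf{FQSeq}$.

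I expect no genuine obstacle here: both the construction and the verification of the hypotheses of Lemma \ref{lemma} transfer without change, and the one additional step --- closure of finite dimensionality under finite tensor products --- is immediate. The only point worth stating explicitly, to forestall any worry about infinite tensor products, is that finiteness of $N$ guarantees the product defining $\dim\mathcal{H}^{\mathrm{seq}}$ has only finitely many factors, so that finite dimensionality of each $\mathcal{H}_i$ indeed forces finite dimensionality of the total space.
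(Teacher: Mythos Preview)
Your proposal is correct and mirrors the paper's approach exactly: the paper simply remarks that the construction in the proof of the preceding theorem immediately yields this result, and your write-up spells out precisely that observation together with the one extra check (finite dimensionality of $\mathcal{H}^{\mathrm{seq}}$ as a finite tensor product of finite-dimensional spaces). There is nothing to add or correct.
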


The converse of Theorem \ref{Theorem2} is the following theorem.

\begin{thm}\label{Theorem3}
 $\mathbf{FQSeq}\trianglelefteq\mathbf{FQSim}$.
\end{thm}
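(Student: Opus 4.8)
The plan is to realise the time-ordered circuit of the given game $G\in\mathbf{FQSeq}$ ``in space'', so that operations which occur sequentially in $G$ instead act in parallel on disjoint subsystems -- which is exactly the structure of a simultaneous game. The tool is gate teleportation via the Choi--Jamio\l kowski correspondence: each operation $\mathcal{E}_k$ is applied offline to one half of a maximally entangled pair $|\Phi\rangle = d^{-1/2}\sum_j|jj\rangle$ (with $d=\dim\mathcal{H}<\infty$), producing its Choi state, and the referee later ``wires'' these Choi states together by projective measurements onto $|\Phi\rangle$, contracting the network into the desired composition $\mathcal{E}_m\circ\dots\circ\mathcal{E}_1$. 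Because $\mathcal{H}$ is finite dimensional, $|\Phi\rangle$ exists and the construction below stays finite.

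Concretely, I would introduce registers $A_0\cong\mathcal{H}$ (the input) and $B_k,C_k\cong\mathcal{H}$ for $k=1,\dots,m$, set $\hat{\rho}^{\mathrm{sim}}_{\mathrm{init}}=\hat{\rho}_{\mathrm{init}}^{A_0}\otimes\bigotimes_{k=1}^m|\Phi\rangle\langle\Phi|_{B_kC_k}$, and assign the output half $C_{\mu(i,j)}$ of each operation to the player who performs it, so that $\mathcal{H}^{\mathrm{sim}}_i=\bigotimes_{j=1}^{m_i}C_{\mu(i,j)}$ for $i\ge 2$, while the referee's ancillas $A_0,B_1,\dots,B_m$ are appended to $\mathcal{H}^{\mathrm{sim}}_1$ (the simultaneous-game framework allows no separate referee register). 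I would take $\Omega^{\mathrm{sim}}_i$ to be exactly the product maps $\mathcal{E}_{\mu(i,1)}\otimes\dots\otimes\mathcal{E}_{\mu(i,m_i)}$, acting as the identity on $A_0,B_1,\dots,B_m$ for player $1$; this is legitimate because $\Omega^{\mathrm{sim}}_i$ may be any subset of the CPTP maps. The obvious map $\psi_i$ sending $(\mathcal{E}_{\mu(i,1)},\dots,\mathcal{E}_{\mu(i,m_i)})\in\Omega_i$ to this product is a bijection, since a tensor product of trace-preserving maps determines its factors uniquely. The total space has finite dimension $d^{2m+1}$, so the resulting $G^{\mathrm{sim}}$ lies in $\mathbf{FQSim}$.

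For the measurement I would let the referee project the junctions $(A_0,B_1),(C_1,B_2),\dots,(C_{m-1},B_m)$ onto $|\Phi\rangle$ and measure $\hat{M}_r$ on the output register $C_m$:
\[
\hat{M}^{\mathrm{sim}}_{(\mathrm{t},r)}=|\Phi\rangle\langle\Phi|_{A_0B_1}\otimes\Bigl(\bigotimes_{k=2}^m|\Phi\rangle\langle\Phi|_{C_{k-1}B_k}\Bigr)\otimes\hat{M}_r^{C_m},
\]
completed to a POVM by the single remaining element $\hat{M}^{\mathrm{sim}}_{\mathrm{rest}}=\hat{I}-\sum_r\hat{M}^{\mathrm{sim}}_{(\mathrm{t},r)}$, which is positive because the junction projector is bounded by $\hat{I}$. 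The payoffs would be $\boldsymbol{a}^{\mathrm{sim}}_{(\mathrm{t},r)}=d^{2m}\boldsymbol{a}_r$ and $\boldsymbol{a}^{\mathrm{sim}}_{\mathrm{rest}}=\boldsymbol{0}$. The computational heart is the one-link teleportation identity $\langle\Phi|_{AB}\,(\rho_A\otimes\sigma^{\mathcal{E}}_{BC})\,|\Phi\rangle_{AB}=d^{-2}\mathcal{E}(\rho)$, where $\sigma^{\mathcal{E}}$ is the Choi state of $\mathcal{E}$; iterating it along the chain shows that the trivial-outcome term equals $d^{-2m}\mathrm{Tr}[\hat{M}_r\,\mathcal{E}_m\circ\dots\circ\mathcal{E}_1(\hat{\rho}_{\mathrm{init}})]$ for every choice of maps. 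Rescaling by $d^{2m}$ then makes $f^{\mathrm{sim}}(\psi_1(\cdot),\dots,\psi_n(\cdot))$ coincide with $f$ of $G$, and Lemma \ref{lemma} yields the equivalence.

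The main obstacle is precisely that the wiring measurements have non-trivial outcomes: for a general (non-Clifford) CPTP map there is no fixed, strategy-independent correction repairing a non-trivial Bell outcome, so one cannot implement the composition deterministically. My resolution sidesteps correction entirely -- all non-trivial outcomes are given zero payoff and the single good outcome is boosted by $d^{2m}$ -- which is legitimate because payoffs are unconstrained real numbers and, crucially, because $d^{2m}$ is finite. This is where the hypothesis $G\in\mathbf{FQSeq}$ is indispensable: in infinite dimension the maximally entangled state $|\Phi\rangle$ fails to exist and the compensating factor diverges, so the argument does not extend to $\mathbf{QSeq}$.
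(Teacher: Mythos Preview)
Your proof is correct and follows essentially the same route as the paper: both realise the sequential composition via Choi--Jamio\l kowski teleportation, post-select on the successful Bell projections, and compensate by rescaling the payoffs. The only differences are cosmetic---the paper distributes the ancilla halves symmetrically (player $i$ receives $2m_i$ copies of $\mathcal{H}$, carrying both the input and output legs of her own operations) and applies $\mathcal{E}_1$ directly to $\hat{\rho}_{\mathrm{init}}$ rather than through an extra link, giving $m-1$ projections and a factor $d^{2m-2}$ instead of your $d^{2m}$; your choice of parking $A_0,B_1,\dots,B_m$ with player $1$ and using a single ``rest'' outcome works equally well.
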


\begin{figure}
 \includegraphics{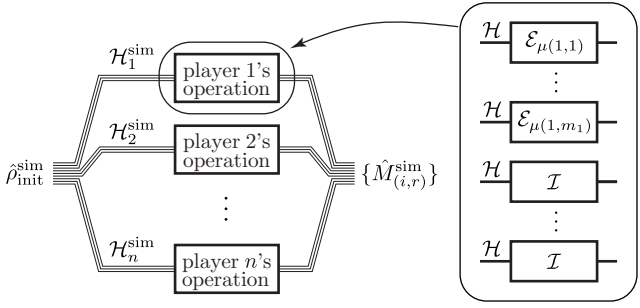}
 \caption{\label{figthm3}A finite quantum simultaneous game $G^{\mathrm{sim}}$
 which is equivalent to a given finite quantum sequential game $G$.}
\end{figure}

\begin{proof}
  We prove the theorem by constructing a finite quantum simultaneous game
 $G^{\mathrm{sim}}$ equivalent to a 
 given finite quantum simultaneous game $G$. We show the construction
 procedure of $G^{\mathrm{sim}}$ first, and then prove the equivalence
 between $G^{\mathrm{sim}}$ and $G$. 
 
 Suppose $G$ is in the following form:
 \begin{gather}
  G=(T,R),\\
  T=(N,\mathcal{H},\hat{\rho}_{\mathrm{init}},Q,\mu,\{\hat{M}_r\}_{r\in \mathcal{R}},\{\boldsymbol{a}_r\}),\\
  R(T)=(N,\Omega,f).
 \end{gather}
 From the above $G$, We construct a finite quantum simultaneous game
 $G^{\mathrm{sim}}$ which is in the following form:
 \begin{gather}
  G^{\mathrm{sim}}=(T^{\mathrm{sim}},R^{\mathrm{sim}}),\\
  T^{\mathrm{sim}}=(N^{\mathrm{sim}},\mathcal{H}^{\mathrm{sim}},\hat{\rho}^{\mathrm{sim}}_{\mathrm{init}},\Omega^{\mathrm{sim}},\{\hat{M}_{(i,r)}^{\mathrm{sim}}\},\{\boldsymbol{a}_{(i,r)}^{\mathrm{sim}}\}),\\
  R^{\mathrm{sim}}(T^{\mathrm{sim}})=(N^{\mathrm{sim}}, \Omega^{\mathrm{sim}}, f^{\mathrm{sim}}).
 \end{gather}
 Figure \ref{figthm3} indicates the setting for $G^{\mathrm{sim}}$. The
 precise instruction on how to construct each component of
 $T^{\mathrm{sim}}$ is given below:
 \begin{itemize}
  \item $N^{\mathrm{sim}}=N=\{1,2,\dots,n\}$.
  \item
       $\mathcal{H}^{\mathrm{sim}}=\mathcal{H}^{\mathrm{sim}}_1\otimes\mathcal{H}^{\mathrm{sim}}_2\otimes\dots\otimes\mathcal{H}^{\mathrm{sim}}_n$, where
       $\mathcal{H}^{\mathrm{sim}}_i = \mathcal{H}^{\otimes
       2m_i}$ and $m_i$ is the number explained in Definition
       \ref{QSeqGame}. This construction means that in game
       $G^{\mathrm{sim}}$, the referee provides player $i$ with a subsystem
       which is itself composed of the $2m_i$ subsystems, each of which
	is the same system as the one used in the original quantum
	sequential game $G$. We write
       the Hilbert space of the $j$-th subsystem of $2m_i$ subsystems
	for player $i$ as
	$\mathcal{H}^{\mathrm{sim}}_{(i,j)}$. Likewise,
       we write a state vector in $\mathcal{H}^{\mathrm{sim}}_{(i,j)}$
       as $|\psi\rangle_{(i,j)}$ and a operator on
       $\mathcal{H}^{\mathrm{sim}}_{(i,j)}$ as $\hat{A}_{(i,j)}$. 
  \item Define a map $\nu:
	\{1,2,\dots,m\}\mapsto\bigcup_{i=1}^n\{(i,m_i+j)\,|\, 1\le j\le
	m_i\}$ by
	\begin{equation}
	 \nu(k)\equiv \mu^{-1}(k)+ (0,m_{i(k)}),
	\end{equation}
	where $i(k)$ is the first element of $\mu^{-1}(k)$. In addition,
	let $\{|1\rangle_{(i,j)},\dots,|d\rangle_{(i,j)}\}$ be an
	orthonormal basis of $\mathcal{H}^{\mathrm{sim}}_{(i,j)}$, where
	$d$ is the dimension of $\mathcal{H}$. Then,
	$\hat{\rho}^{\mathrm{sim}}_{\mathrm{init}}$ is constructed as 
	\begin{widetext}
	\begin{align}
	 	 \hat{\rho}^{\mathrm{sim}}_{\mathrm{init}} =
	 \left(\hat{\rho}_{\mathrm{init}}\right)_{\mu^{-1}(1)}&\otimes
	 |1\rangle_{\nu(m)}\langle 1|_{\nu(m)} \nonumber\\
	 &\bigotimes_{a=1}^{m-1}
	 \left(\frac{1}{\sqrt{d}}\sum_{i_a=1}^d|i_a\rangle_{\nu(a)}|i_a\rangle_{\mu^{-1}(a+1)}\right)
	 \left(\frac{1}{\sqrt{d}}\sum_{j_a=1}^d\langle
	 j_a|_{\nu(a)}\langle j_a|_{\mu^{-1}(a+1)}\right).
	\end{align}
	\end{widetext}
  \item
       $\Omega^{\mathrm{sim}}=\Omega^{\mathrm{sim}}_1\times\Omega^{\mathrm{sim}}_2\times\dots\times\Omega^{\mathrm{sim}}_{n}$,
       where
       \begin{multline}
	\Omega_i^{\mathrm{sim}}=\{\mathcal{E}_{\mu(i,1)}\otimes\mathcal{E}_{\mu(i,2)}\otimes\dots\otimes\mathcal{E}_{\mu(i,m_i)}\otimes\mathcal{I}\otimes\mathcal{I}\otimes\dots\otimes\mathcal{I}\\
	|\,\mathcal{E}_{\mu(i,1)}\in Q_{\mu(i,1)},\dots,
	\mathcal{E}_{\mu(i,m_i)}\in Q_{\mu(i,m_i)}\}.
       \end{multline}
  \item In the game $G^{\mathrm{sim}}$, measurement outcomes are
	described by a pair of variables $(i,r)$, where $i$ takes the value of
	1 or 2, and $r$ is an element of $\mathcal{R}$ (the index set of
	the POVM in the original game $G$). Corresponding POVM elements
	are defined by
	\begin{align}
	 \hat{M}^{\mathrm{sim}}_{(1,r)} &= \hat{K}\otimes (\hat{M}_r)_{\mu^{-1}(m)}\otimes
	 \hat{I}_{\nu(m)},\\
	 \hat{M}_{(2,r)}^{\mathrm{sim}} &= (\hat{I}-\hat{K})\otimes
	(\hat{M}_r)_{\mu^{-1}(m)}\otimes \hat{I}_{\nu(m)}.
	\end{align}
	Here, $\hat{K}$ is defined by
	\begin{widetext}
	\begin{equation}
	 \hat{K}\equiv\bigotimes_{b=1}^{m-1}\left(\frac{1}{\sqrt{d}}\sum_{k_b=1}^d|k_b\rangle_{\mu^{-1}(b)}|k_b\rangle_{\nu(b)}\right)\left(\frac{1}{\sqrt{d}}\sum_{l_b=1}^d\langle l_b|_{\mu^{-1}(b)}\langle l_b|_{\nu(b)}\right).
	\end{equation}
	\end{widetext}
	$\hat{M}_{(1,r)}^{\mathrm{sim}}$ and $\hat{M}_{(2,r)}^{\mathrm{sim}}$ turn out to be positive operators, if we note
	that $\hat{K}$ and $\hat{I}-\hat{K}$ are projection operators
	and $\hat{M}_r$ is a 
	positive operator. Furthermore, the completeness condition is
	satisfied: 
	\begin{equation}
	 \sum_{i=1}^2\sum_{r\in\mathcal{R}} \hat{M}^{\mathrm{sim}}_{(i,r)}=\hat{I}.
	\end{equation}
	Thus, it is confirmed that
	$\{\hat{M}^{\mathrm{sim}}_{(i,r)}\}$ is 
	a POVM. 
  \item We set $\boldsymbol{a}^{\mathrm{sim}}_{(i,r)}$ as
	\begin{equation}
	 \boldsymbol{a}_{(i,r)}^{\mathrm{sim}}=\begin{cases}
			 d^{2m-2}\boldsymbol{a}_r & \text{if } i=1,\\
			 0 & \text{if } i=2.
			\end{cases}
	\end{equation}
 \end{itemize}
 
 Next, we prove that $G^{\mathrm{sim}}$ constructed above is equivalent
 to $G$, using Lemma \ref{lemma}. From $N^{\mathrm{sim}}=N$, condition
 \ref{lemmacond1} of the lemma is satisfied. To show that condition
 \ref{lemmacond2} of the lemma is also satisfied, we define a map
 $\psi_i: \Omega_i^{\mathrm{sim}}\mapsto\Omega_i$ by
 \begin{align}
  &\psi_i(\mathcal{E}_{\mu(i,1)}\otimes\mathcal{E}_{\mu(i,2)}\otimes\dots\otimes\mathcal{E}_{\mu(i,m_i)}\otimes\mathcal{I}\otimes\dots\otimes\mathcal{I})\nonumber\\
  & \quad= (\mathcal{E}_{\mu(i,1)},\mathcal{E}_{\mu(i,2)},\dots,\mathcal{E}_{\mu(i,m_i)}).
 \end{align}
 Then, $\psi_i$ is a bijection. Furthermore, for any element 
 \begin{align}
  &\left((\mathcal{E}_{\mu(1,1)}\otimes\dots\otimes\mathcal{E}_{\mu(1,m_1)}\otimes\mathcal{I}\otimes\dots\otimes\mathcal{I}),\dots,\right.\nonumber\\
  &\qquad\left. (\mathcal{E}_{\mu(n,1)}\otimes\dots\otimes\mathcal{E}_{\mu(n,m_n)}\otimes\mathcal{I}\otimes\dots\otimes\mathcal{I})\right),
 \end{align}
 of $\Omega^{\mathrm{sim}}$,
 one can show after a bit of algebra that 
 \begin{align}
  & f^{\mathrm{sim}}\left( (\mathcal{E}_{\mu(1,1)}\otimes\dots\otimes\mathcal{E}_{\mu(1,m_1)}\otimes\mathcal{
I}\otimes\dots\otimes\mathcal{I}),\dots,\right.\nonumber\\
  &\qquad\left.(\mathcal{E}_{\mu(n,1)}\otimes\dots\otimes\mathcal{E}_{\mu(n,m_n)}\otimes\mathcal{I}\otimes\dots\otimes\mathcal{I})\right) \nonumber\\
  &= f\left(\psi_1(\mathcal{E}_{\mu(1,1)}\otimes\dots\otimes\mathcal{E}_{\mu(1,m_1)}\otimes\mathcal{I}\otimes\dots\otimes\mathcal{I}),\dots,\right.\nonumber\\
   &\qquad\left.\psi_n(\mathcal{E}_{\mu(n,1)}\otimes\dots\otimes\mathcal{E}_{\mu(n,m_n)}\otimes\mathcal{I}\otimes\dots\otimes\mathcal{I})\right).
 \end{align}
 Thus, condition \ref{lemmacond2} of Lemma \ref{lemma} is satisfied.
\end{proof}

\section{\label{secDiscussion}Discussion}
Using Theorem \ref{Theorem2} and Theorem \ref{Theorem3}, we can deduce a
statement about \textbf{FQSeq} (\textbf{FQSim}) from a statement about
\textbf{FQSim} (\textbf{FQSeq}). More precisely, when a statement ``if
$G \in \mathbf{FQSim}$ then $G$ has a property $P$'' is true, another
statement ``if $G\in\mathbf{FQSeq}$ then $G$ has a property $P$'' is
also true, and vice versa. Here, $P$ must be such a property that if a
game $G$ has the property $P$ and $G \parallel G'$, then $G'$ also has
the property $P$. We call such $P$ a property preserved under $\parallel$. For
example, ``a Nash equilibrium exists'' is a property preserved under
$\parallel$. 

Unfortunately, no results are known which have the form ``if
$G\in\mathbf{FQSim}$ (\textbf{FQSeq}) then $G$ has a property $P$, but
otherwise $G$ does not necessarily have the property $P$''. Consequently,
we cannot reap the benefits of the above-mentioned deduction. However,
numerous results exist which have the form ``for a certain subset
$\mathbf{S}$ of \textbf{FQSim} (\textbf{FQSeq}), if $G\in\mathbf{S}$
then $G$ has a property $Q$ preserved under $\parallel$, but otherwise
$G$ does not necessarily have 
the property $Q$''. For such \textbf{S} and $Q$, Theorem \ref{Theorem2}
and Theorems \ref{Theorem3} guarantee that there exists a subset $\mathbf{S'}$
of \textbf{FQSeq} (\textbf{FQSim}) which satisfies the following: ``If
$G\in\mathbf{S'}$ then $G$ has the property $Q$, but otherwise $G$ does
not necessarily have the property $Q$''. In this sense, many of the results
so far on \textbf{FQSim} (\textbf{FQSeq}) can be translated into
statements on \textbf{FQSeq} (\textbf{FQSim}). 

It is worth noting that efficiency of a game \footnote{Amount of
information exchange between players and a referee, required to play a
game. See Ref. \cite{Lee2003} for more detail.} is not a property preserved
under $\parallel$. A good example is in the proof of Theorem \ref{Theorem3}. In
an original quantum sequential game $G$, it is necessary to transmit a
qudit $m+1$ times, while $4m$ times are needed in the
constructed game $G^{\mathrm{sim}}$. Thus, $G^{\mathrm{sim}}$ is far
more inefficient than $G$, despite $G^{\mathrm{sim}}$ and $G$ are
equivalent games. 

Relevant to the present paper is the study by Lee and Johnson
\cite{Lee2003}. To describe their argument, we have to introduce a new
game class.
\begin{defn}
 A finite quantum simultaneous game with all CPTP maps available 
 is a subclass of finite quantum simultaneous games, in which a
 strategy set $\Omega_i$ is the set of all CPTP maps on the set of
 density operators on $\mathcal{H}_i$ for every $i$. We denote the set
 of all finite quantum simultaneous game with all CPTP maps available by
 \textbf{FQSimAll}. 
\end{defn}
 We can easily prove that
$\mathbf{FQSimAll}\vartriangleleft\mathbf{FQSim}$, by showing that the
range of expected payoff functions for a game in 
\textbf{FQSimAll} must be connected, while the one for a game in
\textbf{FQSim} can be disconnected. 

Lee and Johnson claimed that (i) ``any game could be played
classically'' and (ii) ``finite classical games consist of a strict subset
of finite quantum games''.
Using the terms of this paper, we may interpret these claims as follows. 
\begin{thm}\label{prop1}
 $\mathbf{SG}\bowtie\mathbf{G}$.
\end{thm}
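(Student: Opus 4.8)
The plan is to prove $\mathbf{SG}\bowtie\mathbf{G}$, which by definition of $\bowtie$ amounts to establishing both $\mathbf{SG}\trianglelefteq\mathbf{G}$ and $\mathbf{G}\trianglelefteq\mathbf{SG}$. The first relation is immediate, since $\mathbf{SG}\subseteq\mathbf{G}$: for any $G\in\mathbf{SG}$ we have $G\in\mathbf{G}$ and $G\parallel G$ by reflexivity of $\parallel$, so $G$ itself witnesses the required equivalence. Hence the entire content of the theorem lies in the reverse direction $\mathbf{G}\trianglelefteq\mathbf{SG}$: for an arbitrary game $G=(T,R)\in\mathbf{G}$, I must construct a strategic game $G'\in\mathbf{SG}$ with $G\parallel G'$.

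The key idea is that every game, by Definition \ref{DefGame}, produces a normal form $R(T)=(N,\Omega,f)$, and a strategic game is precisely a game whose tuple \emph{is} a normal form and whose rule is the identity. So the natural candidate is to read off the normal form of $G$ and repackage it as a strategic game. Concretely, I would set $T'=(N,\Omega,f)$ and let $R'$ be the rule with $R'(T')=T'$; then $G'=(T',R')$ satisfies all conditions in the definition of a strategic game, because $N$ is finite, $\Omega=\Omega_1\times\dots\times\Omega_n$ is a product of nonempty sets, and $f:\Omega\to\mathbf{R}^n$, all inherited directly from the fact that $(N,\Omega,f)$ is a legitimate normal form of a game. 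By construction the normal form of $G'$ equals the normal form $(N,\Omega,f)$ of $G$.

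To conclude $G\parallel G'$ I would invoke Lemma \ref{lemma}: since $G$ and $G'$ have \emph{literally the same} normal form $(N,\Omega,f)$, condition \ref{lemmacond1} holds with the common player set $N=\{1,\dots,n\}$, and condition \ref{lemmacond2} is satisfied by taking each $\psi_k$ to be the identity map on $\Omega_k$, which is trivially a bijection and preserves $f$. Lemma \ref{lemma} then yields $G\parallel G'$. This shows $\mathbf{G}\trianglelefteq\mathbf{SG}$, and combined with $\mathbf{SG}\trianglelefteq\mathbf{G}$ gives $\mathbf{SG}\bowtie\mathbf{G}$.

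The only subtlety I anticipate—and the one point worth stating carefully rather than the routine verifications—is confirming that the components of the normal form $R(T)$ of an \emph{arbitrary} game automatically meet the structural requirements of a strategic game's tuple. This is guaranteed by the definition of a normal form (Definition 1): any $R(T)$ is a triplet $(N,\Omega,f)$ with $N$ finite, $\Omega$ a product of nonempty sets, and $f:\Omega\to\mathbf{R}^n$, which is exactly the data a strategic game requires. So there is no real obstacle: the theorem is essentially the tautology that strategic games are the ``normal forms viewed as games,'' and since equivalence is defined through normal forms, every game is equivalent to its own strategic repackaging. The proof is therefore short, and the main care is simply in matching the notation of the two definitions.
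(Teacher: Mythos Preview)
Your proposal is correct. The paper does not actually supply a proof of this theorem; it only remarks in Section~\ref{secDiscussion} that Theorem~\ref{prop1} can be proven, without giving the argument. Your construction---taking the normal form $R(T)=(N,\Omega,f)$ of an arbitrary game and repackaging it as the strategic game $((N,\Omega,f),\mathrm{id})$, then applying Lemma~\ref{lemma} with identity bijections---is exactly the argument the framework is set up to make trivial, and is surely what the author had in mind.
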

\begin{thm}\label{prop2}
 $\mathbf{MEFSG}\trianglelefteq\mathbf{FQSimAll}$.
\end{thm}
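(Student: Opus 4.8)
The plan is to prove the nontrivial inclusion directly: given an arbitrary mixed extension $G^{*}\in\mathbf{MEFSG}$ of a finite strategic game with data $(N,S,f)$, where $N=\{1,\dots,n\}$ and each $S_{i}$ is identified with $\{1,\dots,k_{i}\}$, I will exhibit an explicit game $G'\in\mathbf{FQSimAll}$ and show $G^{*}\parallel G'$. For the construction I take $\mathcal{H}_{i}=\mathbf{C}^{k_{i}}$ with orthonormal basis $\{|s\rangle_{i}\}_{s\in S_{i}}$ indexed by the classical actions, $\mathcal{H}=\mathcal{H}_{1}\otimes\dots\otimes\mathcal{H}_{n}$, a fixed product initial state $\hat{\rho}_{\mathrm{init}}=\bigotimes_{i}|1\rangle_{i}\langle 1|_{i}$, the projective measurement $\hat{M}_{(s_{1},\dots,s_{n})}=\bigotimes_{i}|s_{i}\rangle_{i}\langle s_{i}|_{i}$, and payoffs $\boldsymbol{a}_{(s_{1},\dots,s_{n})}=f(s_{1},\dots,s_{n})$. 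Since each $\mathcal{H}_{i}$ is finite dimensional and $\Omega_{i}$ is required to be the full set of CPTP maps on $\mathcal{H}_{i}$, this game lies in $\mathbf{FQSimAll}$.

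The central observation is that the quantum payoff factors through classical mixed strategies. I define $\pi_{i}(\mathcal{E})\in Q_{i}$ by $\pi_{i}(\mathcal{E})(s)=\langle s|_{i}\,\mathcal{E}(|1\rangle_{i}\langle 1|_{i})\,|s\rangle_{i}$, the distribution of computational-basis outcomes after $\mathcal{E}$ acts on the fixed input; this is a probability distribution over $S_{i}$. A direct trace computation on the product state gives
\[
 f^{\mathrm{sim}}(\mathcal{E}_{1},\dots,\mathcal{E}_{n})=\sum_{s_{1},\dots,s_{n}}f(s_{1},\dots,s_{n})\prod_{i=1}^{n}\pi_{i}(\mathcal{E}_{i})(s_{i})=F(\pi_{1}(\mathcal{E}_{1}),\dots,\pi_{n}(\mathcal{E}_{n})),
\]
so the quantum payoff equals the mixed-extension payoff $F$ evaluated at the induced distributions. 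Next I would record that each $\pi_{i}$ is surjective onto $Q_{i}$: for any $q\in Q_{i}$ the preparation channel $\mathcal{E}_{q}(\hat{\sigma})=(\mathrm{Tr}\,\hat{\sigma})\sum_{s}q(s)|s\rangle_{i}\langle s|_{i}$ is CPTP and satisfies $\pi_{i}(\mathcal{E}_{q})=q$. This is exactly where the hypothesis that all CPTP maps are available is used.

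I cannot invoke Lemma \ref{lemma} directly, because $\pi_{i}$ is far from injective (the input state is fixed, so many channels induce the same distribution) and there is no payoff-preserving bijection between the full set of CPTP maps and the simplex $Q_{i}$. Instead I would verify Definition \ref{equivalent} at the level of reduced normal forms. The key step is to show that redundancy in $G'$ matches redundancy in $G^{*}$: using the factorization above together with surjectivity of the $\pi_{j}$ $(j\neq i)$, quantifying over all opponents' quantum strategies is the same as quantifying over all opponents' mixed strategies, so $\mathcal{E}\sim\mathcal{E}'$ in $G'$ if and only if $\pi_{i}(\mathcal{E})\sim\pi_{i}(\mathcal{E}')$ in $G^{*}$. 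Consequently $\pi_{i}$ descends to a bijection $\bar{\pi}_{i}:\tilde{\Omega}_{i}\to\tilde{Q}_{i}$ (well-definedness and injectivity from the ``if and only if'', surjectivity from surjectivity of $\pi_{i}$). Setting $\phi_{i}=\bar{\pi}_{i}^{-1}$ and using the factorization once more yields $\tilde{F}(C_{1},\dots,C_{n})=\tilde{f}^{\mathrm{sim}}(\phi_{1}(C_{1}),\dots,\phi_{n}(C_{n}))$, so $G^{*}\parallel G'$ by Definition \ref{equivalent}, and $\mathbf{MEFSG}\trianglelefteq\mathbf{FQSimAll}$ follows.

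I expect the main obstacle to be precisely this redundancy-class matching, rather than the arithmetic of the payoff identity: one must argue carefully that coarsening by $\sim$ produces the same quotient on both sides, and this hinges on the surjectivity of each $\pi_{i}$, i.e.\ on the full availability of CPTP maps guaranteed by $\mathbf{FQSimAll}$. A convenient packaging would be to first isolate a small lemma stating that a family of payoff-respecting surjections $\pi_{i}$ between normal forms induces game equivalence, and then apply it.
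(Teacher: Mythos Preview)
The paper does not supply its own proof of this theorem: in Section~\ref{secDiscussion} it only asserts that Theorems~\ref{prop1} and~\ref{prop2} ``can be proved'' and attributes the underlying claim to Lee and Johnson, without giving a construction or argument. There is therefore nothing to compare your proposal against directly.

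That said, your argument is correct and well-structured. The construction (product initial state, computational-basis projective measurement, payoffs copied from $f$) is the natural one, and the factorization $f^{\mathrm{sim}}=F\circ(\pi_{1},\dots,\pi_{n})$ together with surjectivity of each $\pi_{i}$ is exactly the right engine. You are also right to flag that Lemma~\ref{lemma} does not apply, since the $\pi_{i}$ are not injective; working directly with Definition~\ref{equivalent} via the ``iff'' between $\mathcal{E}\sim\mathcal{E}'$ and $\pi_{i}(\mathcal{E})\sim\pi_{i}(\mathcal{E}')$ is the clean way to finish, and your verification of well-definedness, injectivity, and surjectivity of $\bar{\pi}_{i}$ is sound. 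The suggested packaging---a lemma that payoff-respecting surjections $\pi_{i}$ between normal forms induce $\parallel$---would indeed be a useful complement to Lemma~\ref{lemma} in this paper's framework.
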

\begin{proposition}\label{prop3}
 $\mathbf{FQSimAll}\ntrianglelefteq\mathbf{MEFSG}$.
\end{proposition}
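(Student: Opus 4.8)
The plan is to exhibit a single game $G\in\mathbf{FQSimAll}$ that is equivalent to no game in $\mathbf{MEFSG}$. The key observation is that the \emph{range} of the payoff function is an invariant of the relation $\parallel$: since the bijections $\phi_k$ of Definition \ref{equivalent} merely relabel the reduced strategy profiles, equivalent games satisfy $\tilde f^{(1)}(\tilde\Omega^{(1)})=\tilde f^{(2)}(\tilde\Omega^{(2)})$, and this common set equals the range of the original payoff function in each case. A second $\parallel$-invariant I will use is the cardinality of each reduced strategy set $\tilde\Omega_k$, which is preserved because $\phi_k$ is a bijection; in particular, ``player $k$ is redundant'' (i.e. $|\tilde\Omega_k|=1$) is preserved. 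Hence it suffices to build $G$ whose payoff range cannot occur as the payoff range of any $\mathbf{MEFSG}$ game having the same redundant players.

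Next I would construct $G$ with $n=2$. Take $\mathcal{H}_1=\mathcal{H}_2=\mathbf{C}^2$, a product initial state $\hat\rho_{\mathrm{init}}=|0\rangle\langle0|\otimes|0\rangle\langle0|$, and all CPTP maps available to both players. Player $2$ is made redundant by choosing the POVM to act only on the first qubit, $\hat M_r=\tilde M_r\otimes\hat I$; then $\mathrm{Tr}[(\tilde M_r\otimes\hat I)(\sigma_1\otimes\tau_2)]=\mathrm{Tr}[\tilde M_r\sigma_1]$, so the payoff never depends on $\mathcal{E}_2$ and $|\tilde\Omega_2|=1$. For player $1$, since every CPTP map on a qubit includes the constant (replacement) channels, the reachable state $\mathcal{E}_1(|0\rangle\langle0|)$ ranges over the entire Bloch ball. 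Taking $\{\tilde M_r\}$ to be an informationally complete qubit POVM (so that every Hermitian operator is a real combination of its elements) and choosing the payoff vectors $\boldsymbol a_r$ appropriately, the payoff vector becomes $\bigl(\mathrm{Tr}[\hat A_1\sigma],\mathrm{Tr}[\hat A_2\sigma]\bigr)$ for prescribed Hermitian $\hat A_1,\hat A_2$. Picking $\hat A_1,\hat A_2$ with orthogonal Bloch directions, the payoff range is the affine image of the Bloch ball, namely a two-dimensional disk.

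Finally, suppose for contradiction that $G\parallel G'$ with $G'\in\mathbf{MEFSG}$; write its normal form as $(N,Q,F)$ with $Q=Q_1\times Q_2$ the product of mixed-strategy simplices over finite pure-strategy sets $S_1,S_2$. Because player $2$ is redundant in $G$, the bijection $\phi_2$ forces $|\tilde Q_2|=1$ in $G'$, which means all of player $2$'s mixed strategies are mutually redundant, i.e. $F$ is independent of $q_2$. Then $F$ is an affine function of $q_1$ on the simplex $Q_1$, so its range is the convex hull of the finitely many payoff points obtained by letting player~$1$ use each pure strategy $s_1\in S_1$ (against any fixed $\bar q_2$) — a convex polygon. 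Since the payoff range is a $\parallel$-invariant, we would need this polygon to equal the disk produced by $G$. This is impossible: a disk has uncountably many extreme points (its whole bounding circle), whereas a convex polygon has only finitely many. Therefore no equivalent $\mathbf{MEFSG}$ exists, giving $\mathbf{FQSimAll}\ntrianglelefteq\mathbf{MEFSG}$.

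The main obstacle is ensuring that the $\mathbf{MEFSG}$ side is genuinely forced to a polygonal range: a general bilinear payoff on a product of simplices can have a curved range, so one cannot simply assert that ``classical payoff ranges are polytopes.'' Making player $2$ redundant in $G$, and using that $\parallel$ transports this redundancy to $G'$ through $\phi_2$, collapses the classical game to an effectively one-player, hence affine, payoff whose range must be a polytope. This reduction is the crux that lets the smooth, strictly convex boundary of the quantum disk produce the contradiction.
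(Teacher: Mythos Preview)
The paper does not contain a proof of this proposition. The author states explicitly, ``we have not yet proven Proposition~\ref{prop3}, which is the reason why we call it a proposition,'' and merely remarks that Lee and Johnson's argument might supply one under a suitable interpretation. So there is no paper proof to compare against; your proposal is an attempt to settle something the paper leaves open.

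As to the proposal itself, the argument is sound. The two $\parallel$-invariants you isolate --- the range $f(\Omega)=\tilde f(\tilde\Omega)$ and the cardinalities $|\tilde\Omega_k|$ --- follow directly from Definition~\ref{equivalent}, since $(\phi_1,\dots,\phi_n)$ is a bijection on reduced profiles intertwining $\tilde f^{(1)}$ and $\tilde f^{(2)}$. The construction of $G$ is clean: with a product initial state and a POVM of the form $\tilde M_r\otimes\hat I$, player~$2$'s channel never enters the payoff, so $|\tilde\Omega_2|=1$; and because replacement channels are CPTP, $\mathcal{E}_1(|0\rangle\langle0|)$ sweeps the full Bloch ball, so an informationally complete $\{\tilde M_r\}$ with suitably chosen real weights produces a payoff map $\sigma\mapsto(\mathrm{Tr}[\hat A_1\sigma],\mathrm{Tr}[\hat A_2\sigma])$ whose image is a genuine disk when $\hat A_1,\hat A_2$ have independent Bloch parts. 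On the classical side, transporting $|\tilde\Omega_2|=1$ to $G'$ forces $F$ to be independent of $q_2$, hence affine in $q_1$ alone, and the image of a simplex under an affine map is the convex hull of finitely many points --- a polytope, never a disk. The contradiction is clean.

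Your closing paragraph correctly identifies the only subtle point: without the redundancy trick the range of a bilinear map on a product of simplices need not be a polytope, so one cannot skip the step of collapsing player~$2$. That diagnosis is exactly right, and your workaround handles it. The argument as written would upgrade Proposition~\ref{prop3} to a theorem.
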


We can prove Theorem \ref{prop1} and Theorem \ref{prop2}, 
regardless of whether or not our interpretations of the claims (i) and
(ii) are
correct. In contrast, we have not yet proven Proposition \ref{prop3}, which is
the reason why we call it a proposition. Nonetheless
Lee and Johnson gave a proof of the statement (ii), so that if the
interpretation that the statement (ii) means 
Theorem \ref{prop2} and Proposition \ref{prop3} is correct, Proposition
\ref{prop3} will be a theorem. 

Using Lemma \ref{preorder} and assuming that Proposition \ref{prop3} is true,
relationships between various game classes can be summarized as follows:
\begin{align}
 \mathbf{MEFSG}\vartriangleleft &\mathbf{FQSimAll}\vartriangleleft \mathbf{FQSim}\bowtie\mathbf{FQSeq}\nonumber\\
 &\trianglelefteq\mathbf{QSim}\trianglelefteq\mathbf{QSeq}\trianglelefteq\mathbf{SG}\bowtie\mathbf{G}. \label{order}
\end{align}
Replacing $\trianglelefteq$ in \eqref{order} with either
$\vartriangleleft$ or $\bowtie$ will be a possible extension of this
paper.

Besides that, it remains to be investigated what the characterizing
features of each game class in \eqref{order} are. Especially, further
research on games which is in \textbf{FQSim} (or equivalently
\textbf{FQSeq}) but not equivalent to any games in \textbf{MEFSG} would
clarify the truly quantum mechanical nature of quantum games.

\begin{acknowledgments}
 I would like to thank Prof. Akira Shimizu for his helpful advice.
\end{acknowledgments}

\bibliography{Qgame}

\end{document}